\documentclass[11pt]{article}

\usepackage[preprint]{acl}

\usepackage{times}
\usepackage{latexsym}

\usepackage[T1]{fontenc}

\usepackage[utf8]{inputenc}

\usepackage{microtype}

\usepackage{inconsolata}

\usepackage{graphicx}

\usepackage{hyperref}
\usepackage{url}
\usepackage{booktabs}
\usepackage{amsfonts}
\usepackage{amsmath}
\usepackage{amssymb}
\usepackage{amsthm}
\usepackage{nicefrac}
\usepackage{microtype}
\usepackage{xcolor}
\usepackage{algorithm}
\usepackage{algorithmic}
\usepackage{multirow}
\usepackage{makecell}
\usepackage{booktabs}
\usepackage{subcaption}
\usepackage{bbm}
\usepackage{booktabs}
\usepackage{multirow}
\usepackage[normalem]{ulem}

\usepackage{pifont}
\newtheorem{theorem}{Theorem}
\newtheorem{lemma}[theorem]{Lemma}

\newtheorem{assumption}{Assumption}
\newtheorem{remark}{Remark}

\newcommand{\E}{\mathbb{E}}
\newcommand{\KL}{\text{KL}}
\newcommand{\TV}{\text{TV}}
\newcommand{\piref}{\pi_{\text{ref}}}
\newcommand{\pitheta}{\pi_\theta}

\newcommand{\Acal}{\mathcal{A}}
\newcommand{\Lcal}{\mathcal{L}}

\newcommand{\Rphon}{R_{\mathrm{inter}}}
\newcommand{\ganchor}{g_{\mathrm{anchor}}}

\def\framework{AP-GRPO}
\title{\framework: Anchor-Gated Phonetic Alignment \\with Policy Optimization for Pathological Speech Reconstruction}

\author{
 \textbf{Pengfei Zhang\textsuperscript{1}},
 \textbf{Hoang H Nguyen\textsuperscript{2}},
 \textbf{Yutong Song\textsuperscript{1}},
 \textbf{Wenjun Huang\textsuperscript{1}},
\\
 \textbf{Tahmid Imtiaz Imu\textsuperscript{3}},
 \textbf{Henry Peng Zou\textsuperscript{2}},
 \textbf{Jiang Wu},
 \textbf{Honghui Xu \textsuperscript{3}},
 \textbf{Amir M. Rahmani \textsuperscript{1}}
\\
\\
 \textsuperscript{1}University of California Irvine,
 \textsuperscript{2}University of Illinois Chicago,
 \textsuperscript{3}Kennesaw State University
}

\begin{document}
\maketitle 

\begin{abstract}
Pathological speech from patients with neurodegenerative and neuromotor disorders is often acoustically distorted and linguistically fragmented, making pathological speech reconstruction necessary to recover intended textual content from distorted and incomplete speech recordings.
Crucially, such recordings are rarely uniformly degraded: some words or short phrases remain reliable and can serve as \emph{audible anchors} for reconstructing the corrupted surrounding content.
We introduce \textbf{Anchor-gated Phonetic Group Relative Policy Optimization (AP-GRPO)}, a GRPO framework with phonetic reward that aligns speech language models (SLMs) through audible-anchor preservation and inter-anchor phonetic compatibility to the original speech signal.
AP-GRPO consists of: (i) an \textbf{anchor-gated reward} that matches reliable audible anchors in clear regions; and (ii) an \textbf{inter-anchor phonetic alignment reward} that evaluates whether recovered contents are phonetically supported by the corresponding corrupted inter-anchor speech span.
Across four disease conditions, AP-GRPO improves faithful speech reconstruction, and the learned anchor constraint automatically adapts to each condition and thus reveals interpretable disease-specific profiles: conditions with severe articulatory degradation require stronger anchor enforcement, whereas milder impairment or linguistically impaired conditions rely more on phonetic alignment for inter-anchor recovery.

\end{abstract}

\begin{figure}[t]
    \centering
    \includegraphics[width=\linewidth]{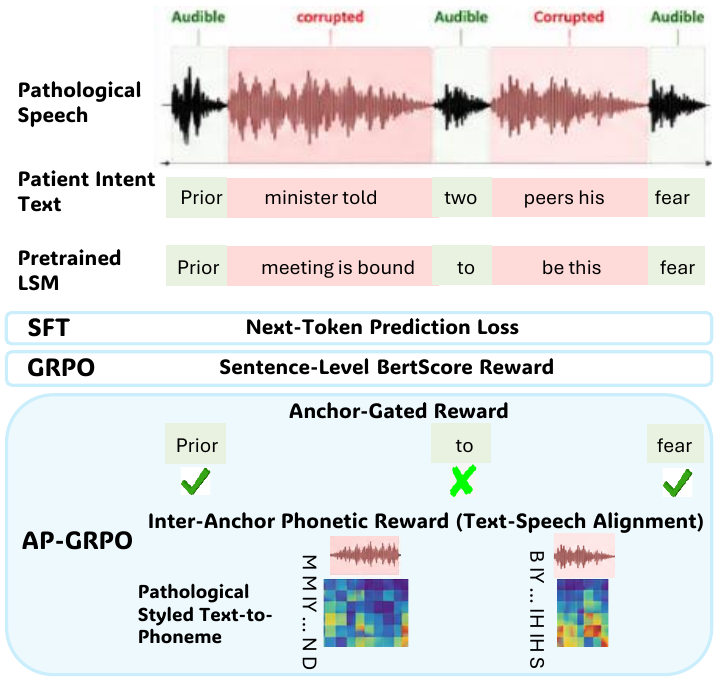}
    \vspace{-0.7cm}
    \caption{\textbf{Conceptual overview of AP-GRPO.} Pathological speech contains both audible spans and corrupted spans. SFT trains via next-token prediction on paired data; vanilla GRPO uses sentence-level BERTScore, which cannot distinguish faithful reconstruction from fluent paraphrasing. AP-GRPO extracts audible anchors as structural boundary conditions and evaluates inter-anchor content by aligning pathologically styled phoneme path against the original speech's signal, targeting precisely the corrupted regions using the patient's own speech as the reference signal.
}
\vspace{-1.5em}
    \label{fig:intro}
\end{figure}

\section{Introduction}

Neurodegenerative and neuromotor disorders such as Parkinson's disease \cite{pakinsondata}, amyotrophic lateral sclerosis (ALS) \cite{torgodata}, cerebral palsy \cite{kim2008uaspeech}, and dementia \cite{luz2021adresso}, frequently impair speech production, limiting a patient's ability to communicate. Unlike conventional Automatic Speech Recognition, \textbf{pathological speech reconstruction (PSR)} must recover a patient's intended text from degraded and sometimes linguistically fragmented recordings \cite{halpern2025dysarthric_overview,qian2023dysarthric_survey}. This distinction is crucial: in severe cases, large acoustic regions may be distorted beyond recognition while the patient's communicative intent remains partially recoverable \cite{song2026demma}.
A key observation, illustrated in Figure~\ref{fig:intro}, is that pathological speech degradation is rarely uniform \cite{duffy2019motor}. Even in a severely distorted recording, some words or short phrases survive intact as \textbf{audible anchors}, while the corrupted inter-anchor spans concentrate the reconstruction errors. Existing approaches do not exploit it: supervised fine-tuning (SFT) treats all positions equally through next-token prediction loss, and sentence-level GRPO \cite{shao2024deepseekmath,guo2025deepseekr1} cannot distinguish faithful reconstruction from fluent paraphrasing - rewarding "ache" as a substitute for "pain" even when the patient clearly produced the latter. 
Other generative reconstruction systems require substantial paired data that is intractable to collect from patients with progressive conditions.

We propose \textbf{Anchor-Gated Phonetic GRPO (AP-GRPO)}, a Group Relative Policy Optimization framework for SLM-based PSR, which addresses these limitations by computing the RL reward based on the alignment to the patient's own speech signal (Figure~\ref{fig:intro}). AP-GRPO first extracts audible anchors as structural boundary conditions and then evaluates each candidate inter-anchor text through phonetic alignment \cite{cuturi2017softdtw}: matching a pathologically styled phoneme sequence from the text against the corresponding speech span. This reward targets precisely the corrupted regions where errors concentrate, using acoustic evidence that is available for every patient recording without any reference transcript.

We evaluate AP-GRPO on four pathological conditions (ALS, cerebral palsy, dementia, Parkinson's). On the most severe conditions, AP-GRPO reduces WER from $\sim 0.75$ to $\sim 0.29$, transforming barely intelligible output into partially correct output suitable for clinical communication and downstream error correction \citep{la2025exploring}. Further analysis shows that AP-GRPO largely suppresses hallucinated insertions which benefits from pathological styling and phonetic alignment. AP-GRPO's optimization automatically adapt to each condition's difficulty, matching clinical impairment profiles without per-disease tuning.




We make three contributions. \textbf{(1)} We introduce AP-GRPO, the first GRPO framework that aligns LSMs to pathological speech. \textbf{(2)} AP-GRPO combines an anchor-gated reward with phonetic compatibility score of pathological-style inter-anchor texts. Experiments demonstrates the effectiveness across all four pathology conditions, with an adaptive mechanism that automatically allocates optimization between anchor preservation and phonetic infilling, which serves as an interpretable proxy for disease-specific reconstruction difficulty.
\textbf{(3)} Stratified Analysis based on disease secerity shows that AP-GRPO improves largely on the severe conditions and can largely suppresses hallucination, opening PSR to clinical settings.

\section{Related Work}

\textbf{Pathological and dysarthric speech reconstruction.}
Unit-DSR \cite{wang2024unitdsr} reconstructs pathological speech through HuBERT-unit \cite{hsu2021hubert} normalization, while CoLM-DSR \cite{chen2024colmdsr} applies codec language modeling to dysarthric speech reconstruction and Diff-DSR \cite{chen2025diffdsr} utilizes Diffusion Models \cite{rombach2022latentdiffusion}. These methods demonstrate the value of generative modeling for pathological speech, but typically rely on substantial paired supervision. 
AP-GRPO uses phonetic evidence from the original speech signal to guide text reconstruction during RL alignment.

\noindent \textbf{RL alignment for speech.}
SpeechAlign \cite{zeng2024speechalign} introduced DPO-style alignment for text-to-speech, and Align-SLM \cite{deng2025alignslm} applied reinforcement learning from AI feedback to spoken language models. These methods primarily rely on sentence-level or preference-level reward signals. AP-GRPO focuses more on phonetic alignment which is important to PSR.

\noindent \textbf{PPGs and phoneme alignment.}
Phoneme posteriorgrams have been used for voice conversion \cite{sun2016ppgvc} and pronunciation assessment \cite{liu2024ppgassess}. Soft-DTW \cite{cuturi2017softdtw} provides a differentiable sequence-alignment objective. We combine PPGs with Soft-DTW to evaluate whether generated inter-anchor text is phonetically supported by the speech.


\begin{figure*}[t]
    \centering
    \includegraphics[width=\linewidth]{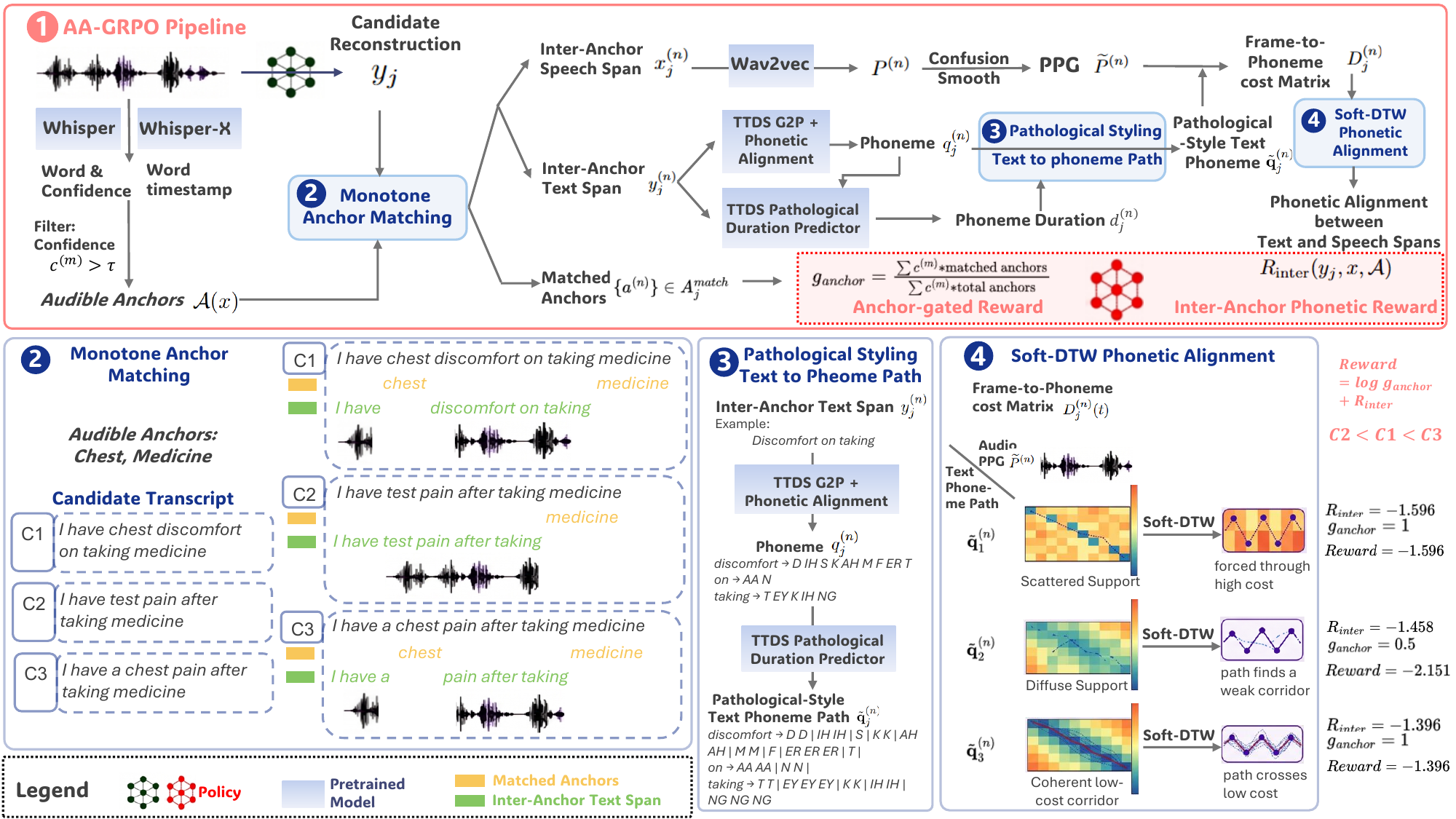}
    \vspace{-0.7cm}
    \caption{\textbf{Pipeline of AP-GRPO}. Audible Anchors are extracted from speech using Whisper/WhisperX with confidence $>\tau$. At RL steps, the policy samples a group of candidate reconstructions (Step \textcircled{2}). For each candidate, AP-GRPO performs anchor matching and thus calculate the inter-anchor text and speech spans. Anchor-gated reward is computed from anchor coverage. In parallel, each inter-anchor text is converted into a pathological-style phoneme path using Pretrained TTDS (Step \textcircled{3}), and corresponding phoneme posteriorgram (PPG) are obtained for each inter-anchor speech span. Phonetic alignment score is measured between the PPG and the candidate phoneme using Soft-DTW (Step \textcircled{4}). AP-GRPO combines both rewards. The example shows three sampled candidates: C1 preserves the anchors but largely acoustically mismatched, C2 weakly matches the left anchor, and C3 best preserves the anchors and yields the lowest cost phonetic alignment, resulting in the highest reward. 
}
    \label{fig:overview}
\end{figure*}

\section{AP-GRPO}
\label{sec:method}

AP-GRPO aligns a speech large language model with pathological speech recordings by using reliable audible fragments as anchors and by scoring the corrupted text between anchors against the original speech signal. Given a pathological speech recording $x$, the policy $\pitheta(\cdot|x)$ generates candidate reconstruction texts. AP-GRPO evaluates each candidate with two components: an anchor-gated reward for preserving reliable audible evidence, and an inter-anchor phonetic reward for checking whether recovered text is supported by the corresponding speech span.


\subsection{Audible Anchors and Candidate Spans}

At Data Preprocessing Stage, for each pathological speech recording $x$, we use whisper-X (\citet{bain2023whisperx, radford2023whisper}) to extract an ordered set of timestamped audible anchors:
\[
\Acal(x)=\{(a^{(m)},s^{(m)},e^{(m)},c^{(m)})\}_{m=1}^{M}, c^{(m)}>\tau.
\]
Here, $a^{(m)}$ is a reliable word or short phrase, $s^{(m)}$ and $e^{(m)}$ are its start and end times, and $c^{(m)}$ is its confidence. Anchors are the word segments with confidence larger than a threshold.

At the RL Stage, the LSM $\pi(\cdot|x)$ samples a group of candidate reconstruction texts,
\[
\{y_1,\ldots,y_G\}\sim\pi(\cdot|x).
\]
Each $y_j$ is a complete hypothesis for the intended text. We align anchors to $y_j$ in their temporal order using monotonic anchor matching: anchors and candidate words are scanned left to right, and each anchor is matched to the earliest compatible unmatched candidate span after the previous match.

For candidate text $y_j$, the region between each pair of matched boundary anchors are defined as an inter-anchor span. We denote the set of such regions by $\mathcal{I}_j$. For each $n\in\mathcal{I}_j$, let $y_j^{(n)}$ be the candidate text span between the matched boundary anchors and $x^{(n)}$ be the corresponding speech span. Intuitively, $y_j^{(n)}$ is the model's recovered text for a region where the original recording is less reliable. Details of matching are provided in Appendix~\ref{app:anchor_extraction}.

\subsection{Pathology-Aware Inter-Anchor Phonetic Alignment}

The most tricky part of PSR is to recover the corrupted text between anchors. A generic text-level reward is poorly suited to this step: it may reward fluent paraphrases that are semantically plausible but not supported by what the patient actually produced. AP-GRPO therefore uses a phonetic alignment reward that evaluates 
how a candidate inter-anchor text $y_j^{(n)}$ can be phonetically supported by the speech span $x^{(n)}$ in pathological scenarios.

\paragraph{Phoneme Evidence from Pathological Speech.}
A frozen wav2vec2 phoneme model converts the speech span $x^{(n)}$ into a phoneme posteriorgram
\[
P^{(n)}\in\mathbb{R}^{T_n\times K},
\]
where $T_n$ is the number of acoustic frames and $K$ is the phoneme inventory size. Each entry $P^{(n)}(t,k)$ estimates the support for phoneme $k$ at frame $t$.

Pathological speech often contains systematic pronunciation deviations: a clinically faithful production may differ from the canonical phoneme realization. For instance, dysarthric articulation may blur voicing contrasts such as /t/ versus /d/, shift fricatives such as /s/ toward /sh/, or centralize vowels such as /i/ toward /ih/ or /ax/ (see Appendix~\ref{app:confusion} full list). To avoid over-penalizing such cases, we smooth the posteriorgram with a row-normalized phoneme confusion matrix $C$:
\begin{equation}
    \widetilde P^{(n)}(t,q)
    =
    \sum_{k=1}^{K} C(q,k)P^{(n)}(t,k).
    \label{eq:confusion_smooth}
\end{equation}
Here, $q$ denotes the candidate phoneme from the text-side phoneme path. The diagonal value $C(q,q)$ is assigned the largest weight, so exact phoneme matches receive the strongest support. Phonetically similar or clinically confusable phonemes are assigned smaller but nonzero weights.
Thus, smoothed PPG $\widetilde P^{(n)}(t,q)$ measures dysarthria-tolerant support for candidate phoneme $q$ at frame $t$, rather than strict canonical phoneme probability.

\paragraph{Pathological-Styling Text-to-Phoneme Path.}
Another important characteristic of pathological speech is that they may contain slowed articulation, prolonged phones, irregular pauses, and unstable rhythm. We therefore convert the candidate text span $y_j^{(n)}$ into a phoneme sequence $q_{j}^{(n)}=[q_{j,1}^{(n)},\ldots,q_{j,N_n}^{(n)}]$, and expand it using a dysarthric duration prior modeled by a pretrained TTDS model (\cite{leung2024ttds}):
\[
\tilde{\mathbf q}_{j}^{(n)}
=
[
\underbrace{q_{j,1}^{(n)},\ldots,q_{j,1}^{(n)}}_{d_{j,1}^{(n)}},
\ldots,
\underbrace{q_{j,N_n}^{(n)},\ldots,q_{j,N_n}^{(n)}}_{d_{j,N_n}^{(n)}}
].
\]
The duration vector $d_{j,N_n}^{(n)},\ldots,d_{j,N_n}^{(n)}$ is predicted by a TTDS-style dysarthric duration model. This expansion does not force an exact timing match; instead, it places the candidate phoneme sequence at an approximate frame-level resolution, while Soft-DTW still allows flexible monotonic warping.

\paragraph{Soft-DTW Phonetic Alignment.}
We build a frame-to-path cost matrix between each phoneme in the candidate pathological-styling text-to-phoneme path and smoothed PPG: 
\begin{equation}
    D_j^{(n)}(t,\ell)
    =
    -
    \log
    \left(
    \widetilde P^{(n)}(t,\tilde q_{j,\ell}^{(n)})
    +
    \epsilon
    \right).
    \label{eq:cost}
\end{equation}
Soft-DTW algorithm, proposed by \citet{cuturi2017softdtw}, computes a differentiable monotonic alignment over this matrix. Specifically, a candidate's span receives a high score when its phonemes can be aligned to frames where the pathological speech signal provides strong support:
\begin{equation}
    R_{\mathrm{span},j}^{(n)}
    =
    -
    \frac{
    \mathrm{SoftDTW}_\gamma(D_j^{(n)})
    }{
    T_n+L_j^{(n)}
    }.
    \label{eq:span_score}
\end{equation}
The inter-anchor phonetic reward averages over all valid spans:
\begin{equation}
    \Rphon(y_j,x,\Acal)
    =
    \frac{1}{|\mathcal{I}_j|}
    \sum_{n\in\mathcal{I}_j}
    R_{\mathrm{span},j}^{(n)}.
    \label{eq:r_inter}
\end{equation}
This reward is pathology-aware. It focuses on inter-anchor regions where reconstruction errors concentrate, tolerates clinically plausible phoneme confusions, and accounts for pathological duration patterns. As a result, AP-GRPO rewards candidate texts that are not merely fluent, but phonetically grounded in the patient's original speech recording. Principles of Soft-DTW from \cite{cuturi2017softdtw} are presented in Appendix~\ref{app:ppg_softdtw_background}.

\subsection{Anchor-Gated Reward}

To ensure trusted recovery of audible anchors, an anchor-gated reward $\ganchor(y_j,\Acal)$ is also applied, which is measured as a confidence-weighted coverage score on how many anchors are preserved:
\begin{equation}
     Match_{\Acal}(y_j)
    =
    \frac{
    \sum_{m=1}^{M} c^{(m)} \mathbbm{1}[a^{(m)}\text{ matched }y_j]
    }{
    \sum_{m=1}^{M} c^{(m)}
    },
    \label{eq:gate1}
\end{equation}
\vspace{-6pt}
\begin{equation}
    \ganchor(y_j,\Acal)
    =
    \epsilon_g
    +
    (1-\epsilon_g)
    Match_{\Acal}(y_j),
    \label{eq:gate2}
\end{equation}
where $\epsilon_g>0$ is a small constant that keeps the gate bounded away from zero.
By assigning confidence as weights to recovered anchors, $\ganchor(y_j,\Acal)$ discourages fluent but unfaithful candidate texts that contradicts reliable audible anchors.

\subsection{Reward and Alignment Objective}

AP-GRPO follows the standard GRPO training recipe: for each speech recording $x$, the current policy samples a group of candidate texts, rewards are computed for each candidate, and the policy is updated using within-group relative advantages. 
Instead of using a generic text-based reward, AP-GRPO uses an anchor-gated phonetic reward grounded in the original pathological speech signal:
\begin{equation}
    R_{j}(y_j,x,\Acal)
    =
    \log \ganchor(y_j,\Acal)
    +
    \Rphon(y_j,x,\Acal).
    \label{eq:reward}
\end{equation}
with anchor coverage implicitly constraint for the alignment:
\begin{equation}
    \E_{\pitheta}[\ganchor]\geq\alpha.
    \label{eq:anchor_constraint}
\end{equation}
We set $\alpha=0.95$ to enforce that most candidate texts should cover all anchors, as the anchors are the most reliable information within the pathological speeches.
During GRPO training stage, this constraint is enforced using an adaptive dual variable $\mu$. 
The per-candidate reward $R_j=R_{j}(y_j,x,\Acal)$ and the constraint signal $g_j$ have different variance profiles. We therefore normalize them independently within the sampled group, following GRPO:
\[
\hat R_{\mathrm{norm},j}
=
\frac{R_j-\bar R}{\hat\sigma_R+\epsilon},
\qquad
\hat g_{\mathrm{norm},j}
=
\frac{g_j-\bar g}{\hat\sigma_g+\epsilon}.
\]
The advantage for candidate text $y_j$ is
\begin{equation}
    A_j
    =
    \hat R_{\mathrm{norm},j}
    +
    \mu \hat g_{\mathrm{norm},j}.
    \label{eq:advantage}
\end{equation}
The advantage $A_j$ tells us how much better candidate $y_j$ is on AP-GRPO's Reward, 
relative to the group. The policy is then updated by maximizing 
the GRPO objective.
All theoretical proof on the convergence and GRPO details,
are provided in Appendix~\ref{sec:optimization} and \ref{app:proofs}.

\begin{table*}[t]
\centering
\footnotesize
\caption{Main results of PSR on four datasets. \textbf{Bold numbers} indicate the best performance, and \underline{underlined numbers} indicate the second-best performance.}
\label{tab:four_dataset_results}
\vspace{-0.3cm}
\resizebox{\textwidth}{!}{
\setlength{\tabcolsep}{5pt}
\begin{tabular}{@{}c@{\hspace{0.5em}}c@{}}
\toprule

\begin{tabular}{lcccc}
\multicolumn{5}{c}{\textit{Amyotrophic Lateral Sclerosis (ALS) - Dataset: TORGO}} \\
\cmidrule(lr){1-5}
Method & WER $\downarrow$ & CER $\downarrow$ & BLEU-4 $\uparrow$ & CONTENT F1 $\uparrow$ \\
\midrule
Align-SLM & 0.8101 & 0.7894 & 0.3133 & 0.4967 \\
Diff-DSR & 1.0191 & 0.7442 & 0.1138 & 0.3075 \\
Colm-DSR & 0.9533 & 0.7698 & 0.1087 & 0.3168 \\
Gemini-Flash-3.5 & 0.4919 & 0.3022 & 0.2324 & 0.4558  \\
\midrule
Audio-Flamingo 3 & 0.7261 & 0.4413 & 0.3305 & 0.5569  \\
Audio-Flamingo 3 + SFT & 0.5699 & 0.5461 & 0.3577 & 0.5709 \\
Audio-Flamingo 3 + AP-GRPO & \underline{0.3577} & \underline{0.2345} & \underline{0.4409} & \underline{0.6364}  \\
\midrule
Qwen2.5-Omni & 1.2570 & 1.0884 & 0.2255 & 0.4175  \\
Qwen2.5-Omni + SFT & 0.7495 & 0.7566 & 0.3599 & 0.5725 \\
Qwen2.5-Omni + AP-GRPO & \textbf{0.2885} & \textbf{0.1764} & \textbf{0.4573} & \textbf{0.6597} \\
\midrule
\multicolumn{5}{c}{\textit{Cerebral Palsy (CP) - Dataset: UASpeech}} \\
\cmidrule(lr){1-5}
Method & WER $\downarrow$ & CER $\downarrow$ & BLEU-4 $\uparrow$ & CONTENT F1 $\uparrow$ \\
\midrule
Align-SLM & 0.6299 & 0.5574 & 0.3155 & 0.4231  \\
Diff-DSR & 0.7044 & 0.6213 & 0.2282 & 0.3545  \\
Colm-DSR & 0.6461 & 0.6525 &  0.2015 & 0.3767  \\
Gemini-Flash-3.5 & 0.5355 & 0.3617 & 0.3596 & 0.5030  \\
\midrule
Audio-Flamingo 3 & 2.4832 & 1.8437 & 0.4214 & 0.5218 \\
Audio-Flamingo 3 + SFT & 0.3755 & \underline{0.2960} & 0.5286 & 0.6205  \\
Audio-Flamingo 3 + AP-GRPO & \textbf{0.2862} & \textbf{0.2086} & \textbf{0.6418} & \textbf{0.7254} \\      
\midrule
Qwen2.5-Omni & 2.7681 & 2.5978 & 0.1890 & 0.4541 \\
Qwen2.5-Omni + SFT & 0.4511 & 0.4660 & 0.6224 & 0.7214 \\
Qwen2.5-Omni + AP-GRPO & \underline{0.3696} & 0.3462 & \underline{0.6249} & \underline{0.7225}  \\
\end{tabular}

&

\setlength{\tabcolsep}{5pt}
\begin{tabular}{lcccc}
\multicolumn{5}{c}{\textit{Dementia - Dataset: ADReSSo}} \\
\cmidrule(lr){1-5}
Method & WER $\downarrow$ & CER $\downarrow$ & BLEU-4 $\uparrow$ & CONTENT F1 $\uparrow$ \\
\midrule
Align-SLM & 0.2968 & 0.1801 & 0.4485 & 0.6530  \\
Diff-DSR & 0.3156 & 0.2188 & 0.4045 & 0.6090  \\
Colm-DSR & 0.2900 & 0.1770 & 0.4572 & 0.6617  \\
Gemini-Flash-3.5 & 0.2586 & 0.1379 & 0.5286 & 0.8824  \\
\midrule
Audio-Flamingo 3 & 0.3541 & 0.1484 & 0.5000 & 0.8440 \\
Audio-Flamingo 3 + SFT & 0.2909 & 0.1336 & 0.5719 & 0.8820 \\
Audio-Flamingo 3 + AP-GRPO & \textbf{0.2075} & \textbf{0.1141} & \textbf{0.6589} & \textbf{0.9083} \\
\midrule
Qwen2.5-Omni & 0.3858 & 0.1599 & 0.4789 & 0.8190 \\
Qwen2.5-Omni + SFT & 0.3090 & 0.1350 & 0.5519 & 0.8760 \\
Qwen2.5-Omni + AP-GRPO & \underline{0.2234} & \underline{0.1199} & \underline{0.6472} & \underline{0.8983}\\
\midrule
\multicolumn{5}{c}{\textit{Parkinson - Dataset: SJTU Parkinson Patient Speech Dataset
}} \\
\cmidrule(lr){1-5}
Method & WER $\downarrow$ & CER $\downarrow$ & BLEU-4 $\uparrow$ & CONTENT F1 $\uparrow$ \\
\midrule
Align-SLM & 0.2284 & 0.0957 & 0.6157 & 0.9055 \\
Diff-DSR & 0.2312 & 0.1161 & 0.5986 & 0.8835 \\
Colm-DSR & 0.2160 & 0.0930 & 0.5950 & 0.8864 \\
Gemini-Flash-3.5 & 0.1817 & 0.0925 & 0.5878 & 0.9023 \\
\midrule
Audio-Flamingo 3 & 0.2426 & 0.0949 & 0.6109 & 0.9022 \\
Audio-Flamingo 3 + SFT & \underline{0.1615} & \underline{0.0717} & \underline{0.6961} & 0.9079 \\
Audio-Flamingo 3 + AP-GRPO & 0.1682 & 0.0783 & 0.6938 & \underline{0.9107} \\
\midrule
Qwen2.5-Omni & 0.2090 & 0.1058 & 0.6400 & 0.8847 \\
Qwen2.5-Omni + SFT & 0.1724 & 0.0799 & 0.6856 & 0.9087 \\
Qwen2.5-Omni + AP-GRPO & \textbf{0.1592} & \textbf{0.0701} & \textbf{0.6992} & \textbf{0.9141} \\

\end{tabular}

\\
\bottomrule
\end{tabular}
}

\end{table*}

\begin{figure*}[t]
\vspace{-0.3cm}
    \centering
    \includegraphics[width=\linewidth]{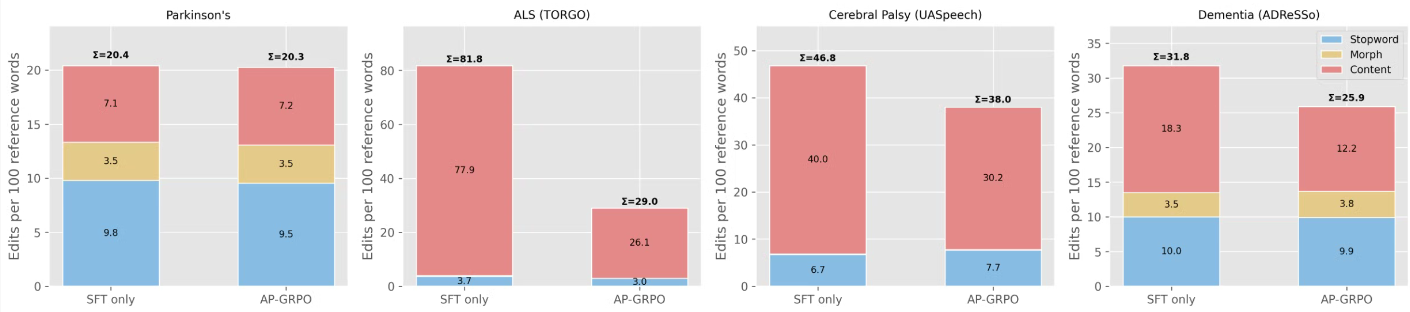}
    \vspace{-0.9cm}
    \caption{WER edits per 100 reference words for SFT-only vs AP-GRPO hypotheses, decomposed into Stopword, Morph, and Content edits. Stopword and Morph errors are invisible to Content-F1 - only the Content errors hurt both metrics.
    All tests are measured from the best AP-GRPO checkpoint on Qwen2.5-Omni-7B.}
    \label{fig:wer_cf1_asym}
\end{figure*}

\section{Experiments}
\label{sec:experiments}

\noindent \textbf{Datasets.} We evaluate across four datasets of different neurodegenerative conditions: ADReSSo \cite{luz2021adresso} on Dementia/Alzheimer's Disease, TORGO \cite{torgodata,rudzicz2012torgo2} on Amyotrophic Lateral Sclerosis (ALS), UASpeech \cite{kim2008uaspeech} on Cerebral Palsy, and SJTU Parkinson Speech Dataset \cite{pakinsondata} on Pakinson. Full introductions are presented at Appendix~\ref{app:data_stats}.


\noindent \textbf{Metrics.}
We evaluate the reconstructed text on both accuracy and semantic fidelity: \textbf{WER} (word error rate, $\downarrow$) for transcript quality; \textbf{CER} (character error rate, $\downarrow$) for character-level accuracy; \textbf{BLEU-4} ($\uparrow$), 4-gram precision with brevity penalty for measuring n-gram overlap; and \textbf{Content-F1} ($\uparrow$), token-level F1 on content words (nouns, verbs, adjectives, adverbs), excluding function words, measuring whether substantive content is preserved.

\noindent \textbf{Baseline.} We compare to: 1) previous work on PSR, including 
diffusion-based Diff-DSR \cite{chen2025diffdsr} and Codec LM-based Colm-DSR \cite{chen2024colmdsr}; 2) large speech models, including Qwen2.5-Omni \cite{xu2025qwen25omnitechnicalreport}, Audio-Flamingo 3 \cite{ghosh2026audio} and Gemini-Flash-3.5 \cite{comanici2025gemini}; 3) other Reinforcement Learning methods including Align-SLM \cite{deng2025alignslm}. As AP-GRPO is applied to Qwen2.5-Omni and Audio-Flamingo 3, we also compare against the SFT-only method on these models.



\noindent \textbf{Implementation Details.} 
For each dataset, we split speakers into train/validation/test partitions with a 70/10/20 ratio, using the held-out test split used only for final evaluation. The validation split is used to monitor alignment dynamics
and to select the checkpoint. Transcript of Speech are used on SFT stage. No ground truth transcript is used on GRPO stage. Anchors are extracted once before RL training using Whisper large-v3 with confidence threshold $\tau=0.85$, followed by WhisperX timestamp refinement; we retain anchors with at least three characters and cache the resulting timestamped anchor set for all train, validation, and test recordings. AP-GRPO is initialized from the SFT policy and trained for one alignment epoch of GRPO steps equal to $\frac{dataset~size}{mini~batch~size}$ with group size $G=32$.
For full details, see Appendix~\ref{app:implementation}.

\section{Results}

\subsection{Main Results}

Table 1 summarizes the performance of AP-GRPO compared to previous methods. AP-GRPO achieves the lowest WER and CER on all four datasets for at least one backbone, with the largest gains on the most challenging conditions:
 
\paragraph{Severe dysarthria: from barely understandable to partially correct.}
On TORGO (ALS), AP-GRPO reduces WER from the SFT baseline of ${\sim}0.75$ and ${\sim}0.57$ down to $0.36$ and $0.29$, respectively - the largest absolute improvements in the table. At WER $> 0.5$, the majority of words in a reconstruction are wrong; the output is unusable for clinical or communicative purposes. At WER ${\sim}0.3$, roughly two-thirds of words are correct: the reconstruction preserves sentence structure and most content words, making it feasible both for healthcare communication (patients and caregivers can verify intent) and as input to downstream LLM-based error correction \citep{la2025exploring}, where a partially correct transcript is far more recoverable than a barely intelligible one. 

\begin{table*}[t]
\centering
\caption{Reward ablation on Qwen 2.5-Omni on Torgo. 
"$\log \ganchor$" = anchor-gated reward. 
"$\Rphon$" = inter-anchor phonetic reward. 
"Cons" = confusion smoothing for calculating $\Rphon$. 
"PaS" = pathological styling based on pretrained TTDS. 
"$\mu$" = adaptive dual variable for anchor constraint $\E_{\pitheta}[\ganchor]\geq\alpha$.}
\label{tab:reward_ablation}
\small
\vspace{-0.3cm}
\resizebox{\textwidth}{!}{
\setlength{\tabcolsep}{5pt}
\begin{tabular}{@{}lccccccccc@{}}
\toprule
Config & $\log \ganchor$ & $\Rphon$ & Cons & PaS & $\mu$ 
  & WER $\downarrow$ & CER $\downarrow$ & BLEU-4 $\uparrow$ & CONTENT F1 $\uparrow$ \\
\midrule
SFT-Only
  & --- & --- & --- & --- & --- & 0.7495 & 0.7566 & 0.3599 & 0.5725 \\
GRPO (Reward = BERTScore) 
  & \texttimes & \texttimes & --- & --- & --- & 0.7457 & 0.7540 & 0.3691 & 0.5178 \\
\midrule
\multicolumn{9}{l}{\textit{AP-GRPO Reward Settings}} \\
Anchor-Gate only ($\log g$, no $\Rphon$)
  & \checkmark & \texttimes & --- & --- & adaptive & 0.7466 & 0.7527 &   0.3637 & 0.5742 \\
Phonetic Reward only ($\Rphon$, no $\log g$)
  & \texttimes & \checkmark & \checkmark & \checkmark & adaptive &  0.6901 & 0.6425 & 0.2012 & 0.2811 \\
Anchor-Gate + smoothed PPG + PaS
  & \checkmark & \checkmark & \checkmark & \checkmark & adaptive & \textbf{0.2885} & \textbf{0.1764} & \textbf{0.4573} & \textbf{0.6597} \\
Anchor-Gate + smoothed PPG (no PaS)
  & \checkmark & \checkmark & \checkmark & \texttimes & adaptive & 0.7404 & 0.7494 & 0.3682 & 0.5814 \\
Anchor-Gate + raw PPG + PaS 
  & \checkmark & \checkmark & \texttimes & \checkmark & adaptive & 0.3725 & 0.2461 & 0.4086 & 0.6235 \\
\midrule
\multicolumn{9}{l}{\textit{Constraint mechanism}} \\
No constraint ($\mu = 0$ fixed) 
  & \checkmark & \checkmark & \checkmark & \checkmark & none & 0.4932 & 0.4435 & 0.3997 & 0.5793 \\
Fixed $\mu = 1.0$ 
  & \checkmark & \checkmark & \checkmark & \checkmark & fixed & 0.4699 & 0.3077 & 0.4259 & 0.5955 \\
\textbf{Adaptive $\mu$ (momentum)} 
  & \checkmark & \checkmark & \checkmark & \checkmark & adaptive 
  & \textbf{0.2885} & \textbf{0.1764} & \textbf{0.4573} & \textbf{0.6597} \\
\bottomrule
\end{tabular}
}
\end{table*}

\vspace{-0.2cm}
\begin{figure*}[t]
\vspace{-0.3cm}
\centering

\begin{subfigure}{0.15\textwidth}
    \centering
    \includegraphics[width=\linewidth]{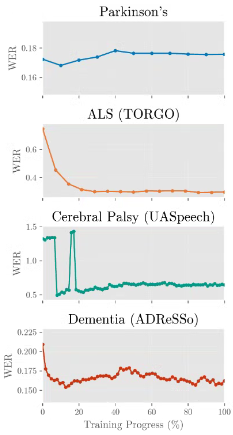}
    \vspace{-0.6cm}
    \caption{Learning Curve}
    \label{fig:left}
\end{subfigure}
\hfill
\begin{subfigure}{0.4\textwidth}
    \centering
    \includegraphics[width=\linewidth]{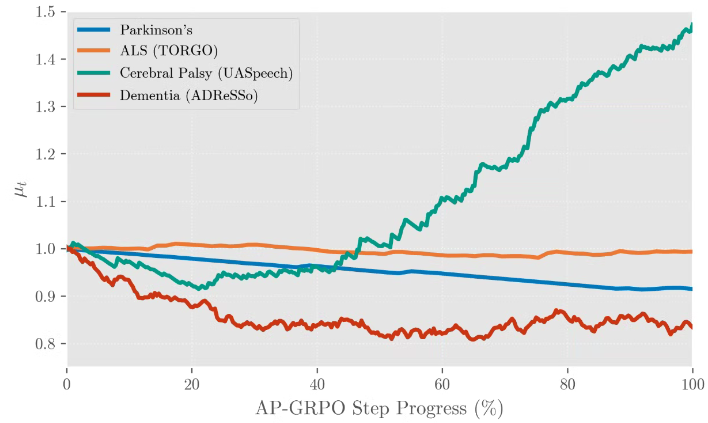}
    \vspace{-0.6cm}
    \caption{Learned dual variable $\mu$ for anchor coverage constraint over the time}
    \label{fig:right}
\end{subfigure}
\hfill
\begin{subfigure}{0.4\textwidth}
    \centering
    \includegraphics[width=\linewidth]{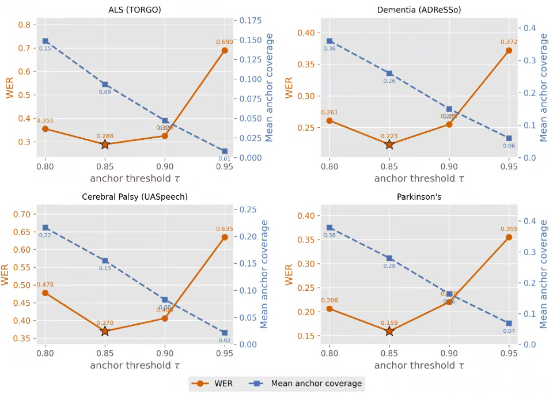}
    \vspace{-0.6cm}
    \caption{WER performance and mean anchor coverage at different anchor threshold}
    \label{fig:right}
\end{subfigure}
\vspace{-0.3cm}
\caption{Different AP-GRPO training configs and trends}
\label{fig:learning_curve}
\end{figure*}

\begin{figure}[t]
    \centering
    \includegraphics[width=\linewidth]{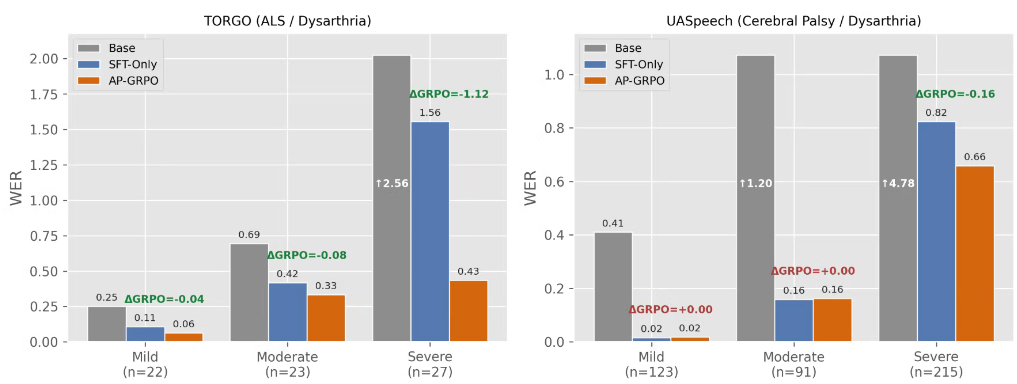}
    \vspace{-0.8cm}
    \caption{Severity-stratified Analysis.
}
    \label{fig:severity}
\end{figure}

\paragraph{Mild dysarthria: consistent but modest gains.}
On UASpeech (Cerebral Palsy), ADReSSo (dementia) and SJTU-PD (Parkinson's), SFT baselines already achieve bearable WER $0.17$--$0.45$. AP-GRPO can still achieve substantial gains on WER. 

 
\paragraph{AP-GRPO benefits both backbones.}
AP-GRPO improves over SFT on both Qwen2.5-Omni and Audio-Flamingo~3 across all four datasets, demonstrating that the method is backbone-agnostic. On TORGO, ADReSSo and Parkinson's dataset, AP-GRPO achieves best performance on Qwen2.5-Omni, and on UASpeech AP-GRPO achieves best performance on Audio Flamingo 3.

\paragraph{AP-GRPO suppresses hallucinated insertions.} On TORGO and UASpeech, AP-GRPO reduces WER by $0.46$ and $0.1$ but Content-F1 improves by only $0.09$ and $0.01$. To understand this asymmetry, we decompose every WER error into three mutually exclusive categories: \textbf{Content} (lexical substitutions, insertions, or deletions of meaningful content words), \textbf{Morph} (same lemma, different surface form, e.g, controls -> controlling, runs -> running), and \textbf{Stopword} (function words, e.g., "a", "the", etc). All three categories contribute to WER edits, but Content-F1 only accounts for Content edits; Morph and Stopword edits are filtered out when calculating Content-F1.
Figure~\ref{fig:wer_cf1_asym} reveals the source of AP-GRPO's improvement: a sharp reduction in content-word errors. On TORGO, content edits drop from $77.9$ to $26.1$ per 100 reference words. The SFT baseline hallucinates extra words on short prompts (e.g., ref = "bat" $\to$ pred = "bat bay a at"), which degrades WER more than Content F1, and AP-GRPO suppresses these fabrications. 
Parkinson's is not affected, confirming that the effect is specific to hallucination-heavy conditions.
This suppression of hallucination is an emerging property of phonetic alignment reward: hallucinated words produce phoneme sequences that are not supported by the speech, receiving low Soft-DTW scores. Importantly, AP-GRPO uses pathological styling which successfully aligns correct texts and suppresses redundant incorrect words.

\paragraph{Convergence.}
Learning curve of AP-GRPO on validation dataset is shown at Figure~\ref{fig:learning_curve} (a). AP-GRPO converges at first 40 steps across all datasets, smaller than 20\% of the total steps.

\subsection{Reward Ablation}

Table~\ref{tab:reward_ablation} isolates the contribution of each design component.
The ablation reveals that AP-GRPO's improvement is not attributable to any single component but to the interaction of each design: the anchor gate, the phonetic reward, pathological styling, and the adaptive learned constraint $\mu$ for applying anchor coverage constraint (Equation~\ref{eq:anchor_constraint}).

\paragraph{Semantic Similarity (BERTScore) reward is not effective for content fidelity inPSR.}
Replacing AP-GRPO's reward with BERTScore (row~2) \cite{zhang2019bertscore} as reward for GRPO slightly improves WER over SFT. BERTScore rewards semantically similar paraphrases (i.e., substituting "discomfort" for "pain"), which inflates the semantic similarity score while degrading exact text recovery. This confirms that text-level semantic rewards are misaligned with the PSR objective.

\paragraph{Two reward components are complementary.}
Using only the anchor-gated reward (row~3) produces WER nearly identical to SFT ($0.747$ vs.\ $0.750$): anchors are the most identifiable signal within the speech, therefore gate values are similar across candidates and the normalized advantage carries almost no learning signal. Conversely, using only the phonetic reward (row~4) reduces WER to $0.690$ - the model generates phoneme sequences that align with the speech PPG but forgets to form coherent words. The two components are complementary: the anchor-gated reward structures the reward around reliable audible evidence, while the phonetic reward provides the discriminative signal for recovering corrupted content between anchors. Removing either one breaks the method.

\paragraph{Pathological speech adaptations are critical for reward quality.}
The two components designed specifically for pathological speech, confusion smoothing and pathological styling, together account for the largest performance gains in the table. Pathological styling (PaS) addresses the temporal distortions common in motor speech disorders. Without it, the candidate phoneme path uses standard durations, creating a severe length mismatch against the audio PPG and making Soft-DTW non-discriminative across candidates (row~7, WER $= 0.740$ vs.\ row~6, WER $= 0.289$). Confusion smoothing addresses systematic dysarthric articulation by assigning nonzero values to phonetically similar or clinically confusable phonemes. Adding smoothing on top of PaS reduces WER from $0.373$ to $0.289$ (rows~8 vs.~6). 

\paragraph{The adaptive constraint closes the remaining gap.}
Rows~10 to 11 use identical rewards but differ in how $\mu$ is handled: $\mu=0$ achieves WER $0.493$, fixed $\mu=1.0$ achieves $0.470$, and adaptive $\mu$ achieves $0.289$. The key advantage of adaptive $\mu$ is that it varies over training: when anchor coverage drops below $\alpha$, $\mu$ increases temporarily to prioritize anchor recovery; when most anchors are recovered, $\mu$ decreases to let the phonetic reward drive inter-anchor improvement. A fixed $\mu$ cannot provide this dynamic balance: if set too low, the model drops anchors; if set too high, the phonetic reward is suppressed and inter-anchor content stagnates. The adaptive mechanism oscillates between these states, allocating capacity to whichever objective needs it at each training step. Figure~\ref{fig:learning_curve} (b) shows the learning curve of $\mu$ over training.

\subsection{Anchor Threshold Sensitivity}
Figure~\ref{fig:learning_curve}(c) reveals WER curves at different confidence threshold, with all optimum values at $\tau = 0.85$.
The two sides of the curve fail for different reasons. At low $\tau$ ($0.80$), more anchors are retained but some are misrecognized or poorly timestamped; these false anchors corrupt both the gate (rewarding incorrect words) and the phonetic reward (aligning candidates against wrong audio segments). At high $\tau$ ($0.90$--$0.95$), anchor coverage drops below 5\%, leaving less anchors that can be referred and inter-anchor spans too long for AP-GRPO to discriminate among candidates, leading to TORGO's WER jumps from $0.288$ to $0.690$. The optimum $\tau = 0.85$ balances anchor reliability against anchor density, and its consistency across all four conditions suggests it reflects Whisper's confidence calibration rather than a disease-specific property, requiring no per-disease tuning.
The WER degradation at strict $\tau$ is sharpest for severe conditions (TORGO: $+0.40$ from $\tau = 0.85$ to $0.95$) and mildest for conditions with abundant anchors (Parkinson's: $+0.20$), consistent with the $\mu_t$ ordering in Figure~\ref{fig:learning_curve}(b): conditions that require stronger anchor enforcement are also most sensitive to anchor threshold selection. Detailed anchor statistics and analysis across four datasets are shown at Figure~\ref{fig:anchor_threshold_stats}.


\subsection{Severity-Stratified Analysis}
We further stratifies results by speaker severity on the two datasets that contain severity annotations, shown at Figure~\ref{fig:severity}.
AP-GRPO's advantage scales with impairment severity. On both datasets, AP-GRPO provides little gain on mild and moderate speakers but largely reduces the WER on severe conditions by $-1.12$ and $-0.16$, respectively, a shift from unintelligible to partially recoverable. 
The pattern is consistent: the phonetic reward contributes most when inter-anchor spans are long and conventional fine-tuning has the least to work with.

Now let's look back to $\mu$ trajectories in Figure~\ref{fig:learning_curve}(b). Cerebral palsy (UASpeech) rises highest to $\mu \approx 1.5$, reflecting the most severe articulatory distortion. $\mu$ for ALS (TORGO) oscillates around $1.0$, also reflecting substantial motor-speech degradation. The ordering, CP $>$ ALS $>$ Parkinson's $>$ dementia, reflects each condition's impact on articulatory precision rather than overall reconstruction difficulty. Notably, dementia produces a lower $\mu$ than Parkinson's despite having higher WER ($0.21$ vs.\ $0.16$): dementia patients typically speak clearly but produce semantically impaired content, making anchors easy to preserve but inter-anchor recovery hard, whereas Parkinson's degrades articulation while preserving communicative intent. This dissociation between $\mu$ and WER demonstrates that $\mu$ captures a specific dimension of pathological speech, audible anchor recognition difficulty, rather than simply tracking overall task difficulty.

\noindent \textbf{Qualitative Examples} are presented at Appendix~\ref{app:qualitative}







\section{Conclusions}

We presented AP-GRPO, which aligns speech language models for pathological speech reconstruction using anchor-gated phonetic rewards derived from the patient's own speech signal. By combining anchor coverage with phonetic alignment scoring and pathological styling, 
AP-GRPO largely improves PER on severe dysarthria, 
crossing the threshold from unintelligible to clinically usable output. The learned anchor constraint automatically discovers disease-specific reconstruction difficulty, aligning with clinical impairment profiles. 
These results suggest that structuring RL rewards around patient-derived acoustic evidence offers a practical path toward accessible communication for people with neurodegenerative speech disorders.
\section{Limitations}
\label{sec:limitations}

AP-GRPO focuses on faithful transcript reconstruction and does not directly generate restored speech audio. We leave audio-domain reconstruction to future work because fine-tuning and reinforcement learning on the speech-generation or Talker module are substantially less stable than text-side alignment in our current setting. In practice, AP-GRPO can serve as a front-end transcript reconstruction module and be followed by personalized or pathology-aware TTS systems to produce audible output.

AP-GRPO also does not fully solve the broader low-resource problem in clinical deployment. Although the reward does not require reference transcripts during RL alignment, the system still depends on a supervised fine-tuned reference policy, anchor extraction, phoneme posteriorgram quality, and disease-specific duration priors. In particular, improvements on the Parkinson's dataset are more limited, suggesting that optimization can be constrained by dataset scale, speaker diversity, and the reliability of disease-specific phonetic modeling. Larger and more diverse pathological speech corpora remain important for robust clinical generalization.

To achieve best performance, the method relies on the quality of extracted audible anchors. When anchor extraction hallucinates, produces poorly calibrated confidence scores, or misses clinically important words, the anchor gate may provide incomplete or biased structural guidance. Although monotonic matching and confidence weighting improve robustness, AP-GRPO may inherits errors from the preprocessing ASR and timestamping components.


Finally, AP-GRPO optimizes for transcript faithfulness rather than clinical correctness or communicative intent in the broadest sense. Some pathological speech contains discourse-level ambiguity, cognitive-linguistic disruption, or multiple plausible reconstructions. In such cases, a single reconstructed transcript may hide uncertainty. Practical deployment should therefore present reconstructions with uncertainty indicators, extracted anchors, and access to the original recording whenever possible.

\bibliography{reference.bib}

@inproceedings{baevski2020wav2vec2,
  title     = {wav2vec 2.0: A Framework for Self-Supervised Learning of Speech Representations},
  author    = {Baevski, Alexei and Zhou, Yuhao and Mohamed, Abdelrahman and Auli, Michael},
  booktitle = {Advances in Neural Information Processing Systems},
  volume    = {33},
  pages     = {12449--12460},
  year      = {2020},
}

@inproceedings{chen2024colmdsr,
  title     = {{CoLM-DSR}: Leveraging Neural Codec Language Modeling for Multi-Modal Dysarthric Speech Reconstruction},
  author    = {Chen, Xueyuan and Yang, Dongchao and Wang, Dingdong and Wu, Xixin and Wu, Zhiyong and Meng, Helen},
  booktitle = {Proc. Interspeech},
  pages     = {4129--4133},
  year      = {2024},
  doi       = {10.21437/Interspeech.2024-1852},
}

@inproceedings{pakinsondata,
  title     = {Parkinson’s Disease Patient Using Transfer Learning Technique},
  author    = {Yu, Qing and Ma, Yi and Li, Yongfu},
  booktitle = {Journal of Shanghai Jiaotong University},
  pages     = {1--18},
  year      = {2021},
  doi       = {https://doi.org/10.1007/s12204-021-2376-3},
}

@article{torgodata,
  title     = {Vocal Tract Representation in the Recognition of Cerebral Palsied Speech},
  author    = {Rudzicz, Frank and Hirst, Graeme and {van Lieshout}, Pascal},
  journal   = {Journal of Speech, Language, and Hearing Research},
  volume    = {55},
  number    = {4},
  pages     = {1190--1207},
  year      = {2012},
  doi       = {10.1044/1092-4388(2011/11-0223)},
}

@article{rudzicz2012torgo2,
  title     = {The {TORGO} Database of Acoustic and Articulatory Speech from Speakers with Dysarthria},
  author    = {Rudzicz, Frank and Namasivayam, Aravind Kumar and Wolff, Talya},
  journal   = {Language Resources and Evaluation},
  volume    = {46},
  number    = {4},
  pages     = {523--541},
  year      = {2012},
  doi       = {10.1007/s10579-011-9145-0},
}

@inproceedings{kim2008uaspeech,
  title     = {Dysarthric Speech Database for Universal Access Research},
  author    = {Kim, Heejin and Hasegawa-Johnson, Mark and Perlman, Adrienne and Gunderson, Jon and Huang, Thomas S. and Watkin, Kenneth and Frame, Simone},
  booktitle = {Proc. Interspeech},
  pages     = {1741--1744},
  year      = {2008},
  doi       = {10.21437/Interspeech.2008-480},
}

@article{luz2021adresso,
  title={Detecting cognitive decline using speech only: The adresso challenge},
  author={Luz, Saturnino and Haider, Fasih and De la Fuente, Sofia and Fromm, Davida and MacWhinney, Brian},
  booktitle={Proc. Interspeech},
  doi = {10.21437/Interspeech.2021-1572},
  year={2021}
}

@inproceedings{radford2023whisper,
  title={Robust speech recognition via large-scale weak supervision},
  author={Radford, Alec and Kim, Jong Wook and Xu, Tao and Brockman, Greg and McLeavey, Christine and Sutskever, Ilya},
  booktitle={International conference on machine learning},
  pages={28492--28518},
  year={2023},
  organization={PMLR}
}

@article{comanici2025gemini,
  title={Gemini 2.5: Pushing the frontier with advanced reasoning, multimodality, long context, and next generation agentic capabilities},
  author={Comanici, Gheorghe and Bieber, Eric and Schaekermann, Mike and Pasupat, Ice and Sachdeva, Noveen and Dhillon, Inderjit and Blistein, Marcel and Ram, Ori and Zhang, Dan and Rosen, Evan and others},
  journal={arXiv preprint arXiv:2507.06261},
  year={2025}
}

@article{ghosh2026audio,
  title={Audio flamingo 3: Advancing audio intelligence with fully open large audio language models},
  author={Ghosh, Sreyan and Goel, Arushi and Kim, Jaehyeon and Kumar, Sonal and Kong, Zhifeng and Lee, Sang-gil and Yang, Chao-Han and Duraiswami, Ramani and Manocha, Dinesh and Valle, Rafael and others},
  journal={Advances in Neural Information Processing Systems},
  volume={38},
  pages={41819--41886},
  year={2026}
}

@misc{xu2025qwen25omnitechnicalreport,
      title={Qwen2.5-Omni Technical Report}, 
      author={Jin Xu and Zhifang Guo and Jinzheng He and Hangrui Hu and Ting He and Shuai Bai and Keqin Chen and Jialin Wang and Yang Fan and Kai Dang and Bin Zhang and Xiong Wang and Yunfei Chu and Junyang Lin},
      year={2025},
      eprint={2503.20215},
      archivePrefix={arXiv},
      primaryClass={cs.CL},
      url={https://arxiv.org/abs/2503.20215}, 
}

@inproceedings{rombach2022latentdiffusion,
  title={High-resolution image synthesis with latent diffusion models},
  author={Rombach, Robin and Blattmann, Andreas and Lorenz, Dominik and Esser, Patrick and Ommer, Bj{\"o}rn},
  booktitle={Proceedings of the IEEE/CVF conference on computer vision and pattern recognition},
  pages={10684--10695},
  year={2022}
}

@article{bain2023whisperx,
  title={Whisperx: Time-accurate speech transcription of long-form audio},
  author={Bain, Max and Huh, Jaesung and Han, Tengda and Zisserman, Andrew},
  journal={arXiv preprint arXiv:2303.00747},
  year={2023}
}

@article{chen2025diffdsr,
  title={DiffDSR: Dysarthric Speech Reconstruction Using Latent Diffusion Model},
  author={Chen, Xueyuan and Yang, Dongchao and Wu, Wenxuan and Wu, Minglin and Xu, Jing and Wu, Xixin and Wu, Zhiyong and Meng, Helen},
  journal={arXiv preprint arXiv:2506.00350},
  year={2025}
}

@article{leung2024ttds,
  title={Training data augmentation for dysarthric automatic speech recognition by text-to-dysarthric-speech synthesis},
  author={Leung, Wing-Zin and Cross, Mattias and Ragni, Anton and Goetze, Stefan},
  journal={arXiv preprint arXiv:2406.08568},
  year={2024}
}

@article{hsu2021hubert,
  title={Hubert: Self-supervised speech representation learning by masked prediction of hidden units},
  author={Hsu, Wei-Ning and Bolte, Benjamin and Tsai, Yao-Hung Hubert and Lakhotia, Kushal and Salakhutdinov, Ruslan and Mohamed, Abdelrahman},
  journal={IEEE/ACM transactions on audio, speech, and language processing},
  volume={29},
  pages={3451--3460},
  year={2021},
  publisher={IEEE}
}

@article{la2025exploring,
  title={Exploring generative error correction for dysarthric speech recognition},
  author={La Quatra, Moreno and Koudounas, Alkis and Salerno, Valerio Mario and Siniscalchi, Sabato Marco},
  journal={arXiv preprint arXiv:2505.20163},
  year={2025}
}

@inproceedings{cuturi2017softdtw,
  title     = {Soft-{DTW}: A Differentiable Loss Function for Time-Series},
  author    = {Cuturi, Marco and Blondel, Mathieu},
  booktitle = {Proceedings of the 34th International Conference on Machine Learning},
  series    = {Proceedings of Machine Learning Research},
  volume    = {70},
  pages     = {894--903},
  year      = {2017},
  publisher = {PMLR},
}

@inproceedings{deng2025alignslm,
  title     = {Align-{SLM}: Textless Spoken Language Models with Reinforcement Learning from {AI} Feedback},
  author    = {Lin, Guan-Ting and Shivakumar, Prashanth Gurunath and Gourav, Aditya and Gu, Yile and Gandhe, Ankur and Lee, Hung-yi and Bulyko, Ivan},
  booktitle = {Proceedings of the 63rd Annual Meeting of the Association for Computational Linguistics (Volume 1: Long Papers)},
  pages     = {20395--20411},
  year      = {2025},
}

@inproceedings{hermansky2000ppg,
  title     = {Temporal Patterns ({TRAPs}) in {ASR} of Noisy Speech},
  author    = {Hermansky, Hynek and Sharma, Sangita},
  booktitle = {Proceedings of the IEEE International Conference on Acoustics, Speech, and Signal Processing (ICASSP)},
  volume    = {1},
  pages     = {289--292},
  year      = {1999},
  publisher = {IEEE},
}

@inproceedings{liu2024ppgassess,
  title     = {Deep Segmental Phonetic Posterior-Grams Based Discovery of Non-Categories in {L2} {E}nglish Speech},
  author    = {Li, Xu and Wu, Xixin and Liu, Xunying and Meng, Helen},
  booktitle = {Proceedings of the IEEE International Conference on Acoustics, Speech, and Signal Processing (ICASSP)},
  pages     = {8194--8198},
  year      = {2020},
  publisher = {IEEE},
}

@article{sakoe1978dtw,
  title   = {Dynamic Programming Algorithm Optimization for Spoken Word Recognition},
  author  = {Sakoe, Hiroaki and Chiba, Seibi},
  journal = {IEEE Transactions on Acoustics, Speech, and Signal Processing},
  volume  = {26},
  number  = {1},
  pages   = {43--49},
  year    = {1978},
  doi     = {10.1109/TASSP.1978.1163055},
}

@inproceedings{sun2016ppgvc,
  title     = {Phonetic Posteriorgrams for Many-to-One Voice Conversion without Parallel Data Training},
  author    = {Sun, Lifa and Li, Kun and Wang, Hao and Kang, Shiyin and Meng, Helen},
  booktitle = {Proceedings of the IEEE International Conference on Multimedia and Expo (ICME)},
  pages     = {1--6},
  year      = {2016},
  publisher = {IEEE},
}

@inproceedings{wang2024unitdsr,
  title     = {{UNIT-DSR}: Dysarthric Speech Reconstruction System Using Speech Unit Normalization},
  author    = {Wang, Yuejiao and Wu, Xixin and Wang, Disong and Meng, Lingwei and Meng, Helen},
  booktitle = {Proceedings of the IEEE International Conference on Acoustics, Speech, and Signal Processing (ICASSP)},
  pages     = {12306--12310},
  year      = {2024},
  publisher = {IEEE},
}

@inproceedings{wu2021ppgmispron,
  title     = {Transformer Based End-to-End Mispronunciation Detection and Diagnosis},
  author    = {Wu, Minglin and Li, Kun and Leung, Wing-Kei and Meng, Helen},
  booktitle = {Proc. Interspeech},
  pages     = {3954--3958},
  year      = {2021},
  doi       = {10.21437/Interspeech.2021-1467},
}

@inproceedings{zeng2024speechalign,
  title     = {{SpeechAlign}: Aligning Speech Generation to Human Preferences},
  author    = {Zhang, Dong and Li, Zhaowei and Li, Shimin and Zhang, Xin and Wang, Pengyu and Zhou, Yaqian and Qiu, Xipeng},
  booktitle = {Advances in Neural Information Processing Systems},
  volume    = {37},
  year      = {2024},
}

@article{song2026demma,
  title={DemMA: Dementia Multi-Turn Dialogue Agent with Expert-Guided Reasoning and Action Simulation},
  author={Song, Yutong and Wu, Jiang and Sharif, Kazi and Xu, Honghui and Dutt, Nikil and Rahmani, Amir},
  journal={arXiv preprint arXiv:2601.06373},
  year={2026}
}

@inproceedings{shao2024deepseekmath,
  title     = {{DeepSeekMath}: Pushing the Limits of Mathematical Reasoning in Open Language Models},
  author    = {Shao, Zhihong and Wang, Peiyi and Zhu, Qihao and Xu, Runxin and Song, Junxiao and Zhang, Mingchuan and Li, Y. K. and Wu, Y. and Guo, Daya},
  booktitle = {arXiv preprint arXiv:2402.03300},
  year      = {2024},
}

@article{guo2025deepseekr1,
  title   = {{DeepSeek-R1}: Incentivizing Reasoning Capability in {LLMs} via Reinforcement Learning},
  author  = {Guo, Daya and Yang, Dejian and Zhang, Haowei and Song, Junxiao and Wang, Peiyi and Zhu, Qihao and Xu, Runxin and Zhang, Ruoyu and Ma, Shirong and others},
  journal = {Nature},
  volume  = {645},
  pages   = {633--638},
  year    = {2025},
  doi     = {10.1038/s41586-025-09422-z},
}

@article{duffy2019motor,
  title     = {Motor Speech Disorders: Substrates, Differential Diagnosis, and Management},
  author    = {Duffy, Joseph R.},
  publisher = {Elsevier},
  edition   = {4th},
  year      = {2019},
}

@article{qian2023dysarthric_survey,
  title   = {A Survey of Automatic Speech Recognition for Dysarthric Speech},
  author  = {Qian, Zhaopeng and Xiao, Kejing},
  journal = {EURASIP Journal on Audio, Speech, and Music Processing},
  volume  = {2023},
  number  = {48},
  year    = {2023},
  doi     = {10.1186/s13636-023-00318-2},
}

@article{zhang2019bertscore,
  title={Bertscore: Evaluating text generation with bert},
  author={Zhang, Tianyi and Kishore, Varsha and Wu, Felix and Weinberger, Kilian Q and Artzi, Yoav},
  journal={arXiv preprint arXiv:1904.09675},
  year={2019}
}

@article{halpern2025dysarthric_overview,
  title   = {Speech Technology for Automatic Recognition and Assessment of Dysarthric Speech: An Overview},
  author  = {Halpern, Bence M. and Scharenborg, Odette and others},
  journal = {Journal of Speech, Language, and Hearing Research},
  volume  = {68},
  number  = {2},
  pages   = {547--577},
  year    = {2025},
  doi     = {10.1044/2024_JSLHR-23-00740},
}

\newpage

\appendix
\section*{Appendix}
\label{sec:appendix}

\section{Implementation Details}
\label{app:implementation}

\subsection{Anchor Extraction and Matching}
\label{app:anchor_extraction}

Anchors are obtained by a two-pass pipeline. First, Whisper large-v3 provides candidate words and confidence scores; words above confidence threshold $\tau=0.85$ and with at least three characters are retained. Second, WhisperX provides word-level timestamps through forced alignment. The final anchor confidence is
\[
c_m=\min(c_m^{\mathrm{Whisper}},c_m^{\mathrm{WhisperX}}).
\]

For each candidate transcript, anchors are matched using normalized monotonic matching. We lowercase text, remove punctuation, and tokenize both anchors and candidate transcripts. Anchors are processed from left to right. For anchor $a_m$, we search only after the previous matched candidate position and select the first candidate token span whose normalized string similarity to $a_m$ exceeds threshold $\rho$. If no such span exists, $a_m$ is marked unmatched. This rule handles repeated anchors by assigning each anchor to the earliest feasible unmatched position and ensures that matched anchors preserve the temporal order of the original speech signal.

\subsection{Model Details}

\paragraph{Backbone.}
We use Qwen2.5-Omni-7B and Audio-Flamingo3 with the Talker module disabled. LoRA is applied to attention layers with rank 16 and alpha 32. The audio encoder is frozen.

\paragraph{SFT.}
The supervised fine-tuning learning rate is $2\times10^{-5}$. We use batch size 3 with gradient accumulation 4. We conduct SFT for 2 epoches.

\paragraph{GRPO.}
We use a single GRPO round with $T=N/3*mini\_batchsize$ steps, group size $G=32$, temperature $1.2$, top-$p=0.9$, maximum output length 64 tokens, clipping parameter $\epsilon=0.2$, and learning rate $5\times10^{-6}$. The dual update uses base step size $\beta_\mu=0.05$, momentum decay $\gamma=0.9$, boost factor $c=2.0$, initial multiplier $\mu_{\mathrm{init}}=1.0$, maximum multiplier $\mu_{\max}=10.0$, and anchor threshold $\alpha=0.95$. The KL coefficient is $\eta_0=0.005$. The hard anchor gate uses $\epsilon_g=0.05$.

\subsection{Phonetic scoring}
We use wav2vec2-lv-60-espeak-cv-ft as a frozen phoneme posteriorgram extractor. The confusion matrix uses articulatory groups with $C_{\mathrm{group}}=0.4$ and $C_{\mathrm{default}}=0.1$. Soft-DTW uses smoothing parameter $\gamma=0.1$. Span expansion uses $\eta=0.5$ and includes anchor phonemes. We require a minimum span length of 5 frames and cap span length at 500 frames.

\subsection{TTDS}
TTDS is trained separately for each disease condition and is used for G2P conversion and dysarthric duration prediction. We perform a full-sentence forward pass and slice the resulting phoneme-duration sequence by span. The maximum duration per phoneme is capped at 10 frames. All parameters follow the original article.

\subsection{Hardware}
Experiments are conducted on $3\times$ NVIDIA A100-80GB GPUs.

\section{Preliminary: Phoneme Posteriorgrams and Soft-DTW for Phonetic Similarity}
\label{app:ppg_softdtw_background}

This section provides background on phoneme posteriorgrams (PPGs) and Soft-DTW, the two components that underlie the inter-anchor phonetic reward in AP-GRPO. 


\subsection{Phoneme Posteriorgrams (PPGs)}

A phoneme posteriorgram is a time--phoneme matrix that summarizes the frame-level phonetic content of a speech signal. Given a speech waveform $x$ of duration $T_F$ frames, a phoneme recognition model $\phi$ produces
\[
P = \phi(x) \in \mathbb{R}^{T_F \times K},
\]
where $K$ is the size of the phoneme inventory and $P(t,k)$ is the posterior probability that phoneme $k$ is active at frame $t$. Each row $P(t,\cdot)$ is a probability distribution over phonemes, and each column $P(\cdot,k)$ traces the temporal activation of phoneme $k$ across the utterance.

PPGs were originally introduced for speaker-independent speech recognition~\citep{hermansky2000ppg} and have since been adopted for voice conversion~\citep{sun2016ppgvc}, where they serve as a speaker-independent intermediate representation that captures \emph{what is said} while discarding speaker identity. More recently, PPGs have been used for pronunciation assessment~\citep{liu2024ppgassess}, mispronunciation detection~\citep{wu2021ppgmispron}, and cross-lingual speech analysis, in each case exploiting the property that PPGs encode phonetic content in a form that is invariant to speaker, channel, and (to some extent) language.

In AP-GRPO, we extract PPGs using a pretrained wav2vec2-based phoneme CTC model~\citep{baevski2020wav2vec2}. Specifically, we use wav2vec2-lv-60-espeak-cv-ft, which was trained on multilingual speech with an espeak-based phoneme inventory. For a speech span $x^{(n)}$ between two audible anchors, the model produces
\[
P^{(n)} = \phi(x^{(n)}) \in \mathbb{R}^{T_n \times K}.
\]
This PPG encodes what phonemes the patient produced in the inter-anchor region, even if the pronunciation is distorted. The model is frozen during all AP-GRPO training; it is used only as a phonetic feature extractor.


\subsection{Dynamic Time Warping (DTW)}

Given two sequences of potentially different lengths, dynamic time warping finds an optimal monotonic alignment between them. Let $\mathbf{a} = (a_1, \ldots, a_T)$ and $\mathbf{b} = (b_1, \ldots, b_L)$ be two sequences, and let $\Delta(a_t, b_l)$ be a local cost function measuring the dissimilarity between elements $a_t$ and $b_l$. DTW computes
\[
\mathrm{DTW}(\mathbf{a}, \mathbf{b})
=
\min_{\pi \in \mathcal{A}_{T,L}}
\sum_{(i,j) \in \pi}
\Delta(a_i, b_j),
\]
where $\mathcal{A}_{T,L}$ is the set of all monotonic alignment paths from $(1,1)$ to $(T,L)$. Each alignment path $\pi$ is a sequence of index pairs $(i,j)$ satisfying monotonicity (indices only increase) and boundary constraints (the path starts at $(1,1)$ and ends at $(T,L)$). The DTW distance is the minimum total cost over all valid paths.

DTW has been widely used in speech processing for template-based recognition~\citep{sakoe1978dtw}, keyword spotting, and pronunciation scoring. Its key property is tolerance to temporal variation: two realizations of the same phoneme sequence that differ in speaking rate will still align well under DTW, because the warping path can stretch or compress either sequence locally.

However, standard DTW has two limitations for use in a reward function. First, it is not differentiable with respect to continuous inputs, because the $\min$ operation over discrete alignment paths produces a piecewise-constant objective. Second, it is sensitive to the single best alignment path, which can be unstable when the cost matrix is noisy.

\subsection{Soft-DTW}

Soft-DTW~\citep{cuturi2017softdtw} addresses both limitations by replacing the hard minimum over alignment paths with a soft minimum:
\[
\mathrm{SoftDTW}_\gamma(\mathbf{a}, \mathbf{b})
=
\mathrm{softmin}_{\pi \in \mathcal{A}_{T,L}}^{\gamma}
\sum_{(i,j) \in \pi}
\Delta(a_i, b_j),
\]
where the soft-minimum operator is defined as
\[
\mathrm{softmin}^{\gamma}(z_1, \ldots, z_N)
=
-\gamma \log \sum_{n=1}^{N} \exp(-z_n / \gamma).
\]
As $\gamma \to 0$, $\mathrm{SoftDTW}_\gamma$ recovers standard DTW. For $\gamma > 0$, Soft-DTW is differentiable and averages over exponentially many alignment paths, weighted by their cost. Paths with lower total cost receive higher weight.

In practice, Soft-DTW is computed via a Bellman recursion analogous to standard DTW. Let $D \in \mathbb{R}^{T \times L}$ be the cost matrix with $D(t,l) = \Delta(a_t, b_l)$. Define
\[
R(0,0) = 0,
\]

\[\qquad R(i,j) = +\infty \text{ for } i < 0 \text{ or } j < 0,\]
and for $i = 1, \ldots, T$ and $j = 1, \ldots, L$:
\begin{equation}
\begin{aligned}
R(i,j)
& =
D(i,j)
+ \mathrm{softmin}^{\gamma}
\bigl(
R(i{-}1,j),\; \\
& R(i,j{-}1),\;
R(i{-}1,j{-}1)
\bigr).
\label{eq:app_softdtw_recursion}
\end{aligned}
\end{equation}
The Soft-DTW value is $\mathrm{SoftDTW}_\gamma(D) = R(T, L)$. The computation is $O(T \times L)$, the same as standard DTW.

The temperature parameter $\gamma$ controls the sharpness of the alignment. Small $\gamma$ (e.g., $0.01$) concentrates weight on the single best path, approaching hard DTW. Large $\gamma$ (e.g., $1.0$) distributes weight more uniformly across paths, producing a smoother but less discriminative score. In AP-GRPO, we use $\gamma = 0.1$, which provides a smooth score while remaining sensitive to alignment quality.

\subsection{Measuring Phonetic Similarity with PPG + Soft-DTW}

Combining PPGs with Soft-DTW yields a phonetic similarity measure between a speech signal and a candidate phoneme sequence. The setup is:

\begin{itemize}
\item \textbf{Speech side:} A speech span $x^{(n)}$ is converted to a PPG $P^{(n)} \in \mathbb{R}^{T \times K}$ by the frozen wav2vec2 model. Each row $P^{(n)}(t, \cdot)$ is a distribution over phonemes at frame $t$.
\item \textbf{Text side:} A candidate text span is converted to a phoneme sequence $(q_1, \ldots, q_L)$ via G2P. Optionally, each phoneme is expanded by its predicted duration to produce a frame-level phoneme path $(\tilde{q}_1, \ldots, \tilde{q}_{\tilde{L}})$.
\item \textbf{Cost matrix:} The element $D(t, l)$ measures how well frame $t$ of the speech supports the candidate phoneme at position $l$:
\[
D(t, l) = -\log\bigl(P^{(n)}(t, \tilde{q}_l) + \epsilon\bigr).
\]
When the PPG assigns high probability to the candidate phoneme $\tilde{q}_l$ at frame $t$, the cost is low.
\item \textbf{Alignment score:} Soft-DTW finds the minimum-cost monotonic alignment between the $T$ speech frames and the $\tilde{L}$ candidate phoneme positions. The normalized score is
\[
S = -\frac{\mathrm{SoftDTW}_\gamma(D)}{T + \tilde{L}},
\]
where normalization by $T + \tilde{L}$ prevents systematic bias toward shorter sequences. Higher $S$ indicates stronger phonetic compatibility.
\end{itemize}

This combination has several properties that make it suitable for scoring pathological speech reconstructions:

\paragraph{Temporal flexibility.} Soft-DTW tolerates speaking rate variation, which is common in dysarthria (e.g., Parkinson's bradykinesia, ALS fatigue-related slowing). A candidate phoneme sequence that is phonetically correct but temporally compressed or expanded relative to the speech signal will still achieve a good alignment score, because the warping path compensates for rate differences.

\paragraph{Soft phonetic matching.} The PPG provides a soft distribution over phonemes at each frame, rather than a hard phoneme label. When a patient produces a distorted phoneme, the PPG typically assigns partial probability mass to the intended phoneme and to acoustically similar neighbors. This makes the score tolerant of mild articulatory imprecision without requiring explicit knowledge of the patient's error patterns. In AP-GRPO, we further improve this tolerance through confusion smoothing (Section~\ref{eq:confusion_smooth} in the main text).

\paragraph{Monotonic alignment.} The DTW constraint that the alignment path must be monotonic is well suited to speech, where phonemes are produced in a fixed temporal order. Unlike bag-of-phoneme comparisons, Soft-DTW rewards candidates whose phonemes appear in the correct temporal sequence relative to the speech signal.

\paragraph{Computational cost.} The Soft-DTW recursion is $O(T \times L)$ per span. For a typical inter-anchor span with $T \approx 50$ frames and $L \approx 30$ expanded phoneme positions, the computation takes under 1~ms on GPU. This is negligible compared to the cost of candidate generation from the speech LLM.

\section{Confusion Matrix Construction}
\label{app:confusion}

The phoneme confusion matrix $C\in[0,1]^{K\times K}$ is constructed from articulatory feature groups. The diagonal entries are set to $C(q,q)=1.0$. Phoneme pairs within the same articulatory group receive $C(q,k)=0.4$, and unrelated pairs receive $C(q,k)=0.1$.

The default groups are based on common dysarthric confusions:
\[
\begin{aligned}
&\text{front unrounded vowels: } /i,\ IH,\ e,\ EH/,\\
&\text{back rounded vowels: } /u,\ UH,\ o,\ AO/,\\
&\text{open vowels: } /a,\ AE,\ AA,\ AH/,\\
&\text{central vowels: } /AX,\ ER/,\\
&\text{bilabial stops: } /p,\ b/,\\
&\text{alveolar stops: } /t,\ d/,\\
&\text{velar stops: } /k,\ g/,\\
&\text{sibilant fricatives: } /s,\ z,\ SH,\ ZH/,\\
&\text{non-sibilant fricatives: } /f,\ v,\ TH,\ DH/,\\
&\text{nasals: } /m,\ n,\ NG/,\\
&\text{liquids: } /l,\ R/,\\
&\text{glides: } /w,\ y/.
\end{aligned}
\]
The matrix can also be refined using empirical phoneme-confusion statistics from a dysarthric development set.

\section{Primal Oracle Validation}
\label{app:epsilon_p}

\begin{figure}[t]
    \centering
    \includegraphics[width=\linewidth]{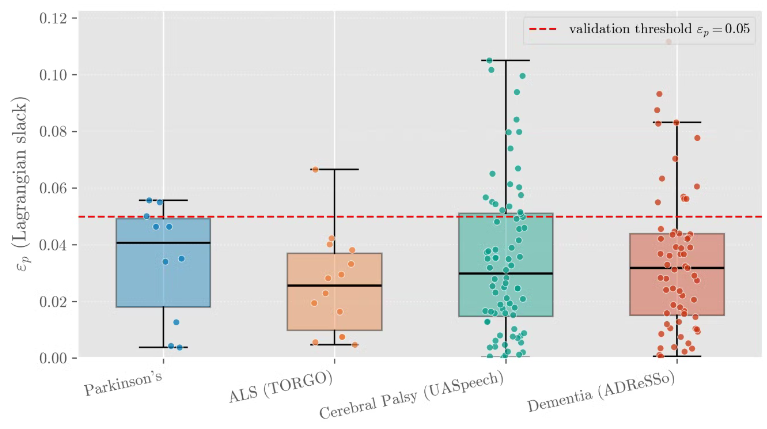}
    \caption{Lagrangian slack $\epsilon_p$ statistics on each GRPO steps
}
    \label{Lagrangian_slack}
\end{figure}

Assumption~\ref{assum:primal} requires that each GRPO update approximately maximizes the Lagrangian for the current multiplier. We validate this assumption by estimating $\epsilon_p$ through random LoRA weight perturbation.

At every fifth training step, we perturb the LoRA weights in 20 random directions with perturbation scale $0.01$ and recompute the Lagrangian for each perturbed model. We estimate
\[
\epsilon_p
=
\max
\left(
0,\;
\max_k
\Lcal(\theta+\delta_k,\mu_t)
-
\Lcal(\theta_t,\mu_t)
\right).
\]
We report the median and 95th percentile of $\epsilon_p$ across all measurement points at Figure~\ref{Lagrangian_slack}. In our validation criterion, the median $\epsilon_p$ should remain below $0.05$.

\section{Datasets Details}
\label{app:data_stats}

We evaluate AP-GRPO on four pathological speech datasets covering neurodegenerative and neuromotor disorders. All datasets were originally collected and released by their respective authors and institutions for research use. Our study uses these existing datasets only; we do not recruit new participants or collect new speech recordings. Where provided by the dataset documentation, data collection was conducted under institutional ethics or IRB approval, and all recordings are used according to the corresponding dataset access terms. The dataset statistics are presented at Table~\ref{tab:datasets}

\paragraph{ADReSSo.}
ADReSSo is an Alzheimer's and dementia speech benchmark derived from spontaneous picture-description recordings. It is designed for acoustic-only Alzheimer's dementia recognition and provides speech recordings balanced for demographic factors such as age and gender. In our setting, ADReSSo represents the linguistically fragmented side of pathological speech: the acoustic signal may remain partly intelligible, while lexical retrieval, semantic consistency, and discourse organization can be impaired. We use ADReSSo to evaluate whether AP-GRPO can preserve reliable audible anchors while reconstructing missing or degraded speech content in dementia-related speech.

\paragraph{TORGO.}
TORGO is a dysarthric speech corpus collected by the University of Toronto in collaboration with clinical partners. It contains aligned acoustic recordings, transcripts, and articulatory information from speakers with dysarthria caused by cerebral palsy or amyotrophic lateral sclerosis, together with non-dysarthric controls. In this work, we use the dysarthric ALS-related subset to evaluate reconstruction under severe motor-speech degradation. TORGO is especially useful for AP-GRPO because articulatory impairment often produces sparse but reliable audible anchors surrounded by acoustically distorted inter-anchor regions.

\paragraph{UASpeech.}
UASpeech is a dysarthric speech dataset collected at the University of Illinois Urbana-Champaign from speakers with cerebral palsy. The corpus contains isolated-word recordings with multiple repetitions across common words, uncommon words, digits, computer commands, and radio alphabet tokens. The official dataset documentation states that collection was approved by the University of Illinois Institutional Review Board. We use UASpeech to evaluate anchor preservation and phonetic reconstruction under cerebral-palsy-related dysarthria, where atypical articulation and reduced intelligibility can strongly affect ASR and transcript reconstruction.

\paragraph{SJTU-PD.}
SJTU-PD is a Parkinson's disease speech dataset released by the SJTU research group. It contains speech recordings from ten patients diagnosed with Parkinson's disease, including original and denoised speech versions. We use SJTU-PD to evaluate AP-GRPO under hypokinetic dysarthria, where reduced loudness, imprecise articulation, altered prosody, and abnormal speaking rate can affect both anchor extraction and inter-anchor phonetic alignment. This dataset complements TORGO and UASpeech by testing the method on neurodegenerative motor-speech impairment rather than primarily developmental or neuromotor dysarthria.

\begin{table*}[t]
\centering
\caption{Dataset summary.}
\label{tab:datasets}
\small
\begin{tabular}{@{}llccl@{}}
\toprule
Dataset & Disease & Speakers & Paired Speech-Text data & Source \\
\midrule
ADReSSo & Alzheimer's/Dementia &  & Real & \citet{luz2021adresso} \\
TORGO & ALS & 8 dysarthric & Real \& Lexical Aligned & U.\ Toronto \\
UASpeech & Cerebral Palsy & 15 dysarthric & Real \& Lexical Aligned & UIUC \\
SJTU-PD & Parkinson's & 10 patients & Real & SJTU \\
\bottomrule
\end{tabular}
\end{table*}

\subsection{Lexical Align Preprocessing for Word-Level Corpora}
\label{app:lexical_merge}

\begin{table*}[t]
\centering
\caption{Corpus statistics before and after lexical align.}
\label{tab:lexical_merge}
\small
\begin{tabular}{@{}llcccc@{}}
\toprule
Corpus & Stage & Speakers & Utterances & Mean words/utt. & Vocab coverage \\
\midrule
\multirow{3}{*}{TORGO}
 & Original (words) & 15 & $\sim$7{,}800 & 1.0 & --- \\
 & After Stage 1 & 15 & 2{,}470 & 3.3 & 100\% \\
 & After Stage 2 & 15 & 833 & 10.4 & 100\% \\
\midrule
\multirow{2}{*}{UASpeech}
 & Original (words) & 25 & $\sim$9{,}500 & 1.0 & --- \\
 & After Stage 2 & 25 & 3{,}819 & 9.5 & 100\% \\
\bottomrule
\end{tabular}
\end{table*}

Several dysarthric speech corpora, including TORGO and UASpeech, are collected at the single-word level: each recording contains one isolated word produced in response to a prompt. This word-level format is unsuitable for sentence-level speech reconstruction, where the model must recover multi-word utterances from continuous pathological speech. We therefore apply a two-stage \emph{lexical align} procedure that concatenates word-level recordings into pseudo-sentences while preserving per-speaker vocabulary coverage.

\subsubsection{Stage 1: Template-Driven Sentence Construction}

For each speaker, we construct short pseudo-sentences (target $\sim$3--4 words) from the speaker's single-word recordings.

\paragraph{POS tagging.}
We POS-tag the global single-word vocabulary using NLTK, with two accuracy improvements: each word is tagged in a pseudo-context (``the \textsc{word}'') to provide the tagger with a syntactic anchor, and high-frequency function words (determiners, prepositions, conjunctions, pronouns, copulas) are overridden with a hand-coded dictionary. Penn Treebank tags are bucketed into nine coarse classes: \textsc{noun}, \textsc{verb}, \textsc{adj}, \textsc{adv}, \textsc{det}, \textsc{prep}, \textsc{pron}, \textsc{conj}, and \textsc{other}.

\paragraph{Template-driven generation.}
We define 22 English sentence templates ordered from short to long, such as [\textsc{det}, \textsc{noun}, \textsc{verb}], [\textsc{det}, \textsc{adj}, \textsc{noun}], [\textsc{det}, \textsc{noun}, \textsc{verb}, \textsc{det}, \textsc{noun}], and [\textsc{noun}, \textsc{prep}, \textsc{det}, \textsc{noun}]. For each speaker, we build a speaker-local vocabulary indexed by POS class and iterate over all templates for up to 8 passes. In each pass, template slots are filled by preferring unused words when available and falling back to previously used words otherwise. Templates that cannot be filled due to missing POS classes in the speaker's vocabulary are skipped. Generation stops when per-speaker vocabulary coverage reaches 90\%.

\paragraph{Tail coverage.}
Any words still unused after template generation are grouped into chunks of 4 and emitted as verbatim filler sequences, guaranteeing 100\% per-speaker vocabulary coverage.

\paragraph{Audio concatenation.}
For each pseudo-sentence, the constituent word-level recordings are concatenated with 80\,ms of silence between words. We select one canonical recording per (speaker, word) pair, preferring the array microphone channel when available.

\subsubsection{Stage 2: Sentence Lengthening}

Stage~1 produces short pseudo-sentences (mean $\sim$3.3 words). To better approximate natural sentence lengths, Stage~2 greedily stitches Stage~1 outputs into longer sequences.

For each speaker, Stage~1 sentences are sorted by word count in descending order. A greedy algorithm pops sentences from the queue and appends them until the accumulated word count exceeds a target length (default: 9 words for TORGO, 8 words for UASpeech), with a hard minimum floor (6 words for TORGO, 5 words for UASpeech). Audio segments are concatenated with a longer inter-sentence pause (160\,ms for TORGO, 250\,ms for UASpeech) to acoustically distinguish sentence boundaries from word boundaries. Residual sentences that cannot independently reach the minimum floor are folded into the last emitted sentence.

\subsubsection{Corpus Statistics After Lexical Align}

The merged corpora retain 100\% per-speaker vocabulary coverage while providing utterance lengths comparable to natural conversational speech. All AP-GRPO experiments on TORGO and UASpeech use the Stage~2 merged data.

\section{Anchor Coverage Statistics}
\label{app:anchor_stats}

Before training, we analyze the distribution of anchor coverage across datasets and confidence thresholds to understand the raw material available to AP-GRPO. Figure~\ref{fig:anchor_threshold_stats} reports anchor coverage (number of anchors divided by number of words in the reference transcript) per utterance, stratified by utterance length and Whisper confidence threshold $\tau \in \{0.80, 0.85, 0.90, 0.95\}$.


We make four observations.

\paragraph{Coverage decreases sharply with $\tau$.}
At $\tau = 0.80$, median coverage ranges from 20--50\% across datasets and utterance lengths, providing a reasonable density of audible anchors for defining inter-anchor spans. At $\tau = 0.90$, coverage drops to 5-15\% for most conditions - roughly one or two anchors per 10-word utterance. At $\tau = 0.95$, coverage is near zero for all datasets except TORGO's original single-word recordings: almost no words survive the confidence threshold.

This has direct consequences for AP-GRPO. With very few anchors, inter-anchor spans stretch across the entire utterance, and the Soft-DTW cost matrix becomes too large and underconstrained for discriminative alignment. This explains the training stall observed in early experiments with $\tau = 0.9$: the phonetic reward produced near-identical scores across all candidates because the spans were too long to differentiate. Our final setting of $\tau = 0.85$ provides denser anchors and shorter, more discriminative spans.


\begin{figure*}[t]
\centering

\begin{subfigure}{0.235\textwidth}
\centering
\includegraphics[width=\linewidth]{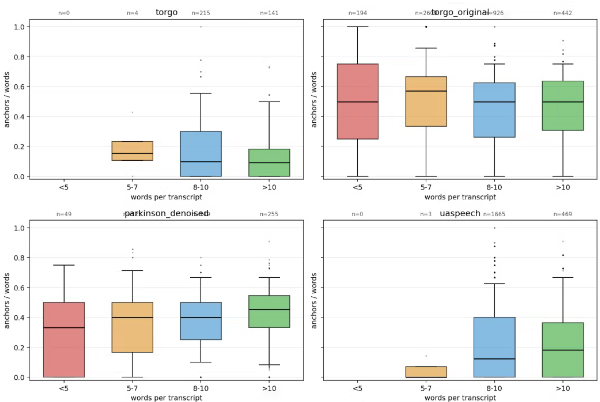}
\caption{$\tau=0.8$}
\end{subfigure}
\hfill
\begin{subfigure}{0.235\textwidth}
\centering
\includegraphics[width=\linewidth]{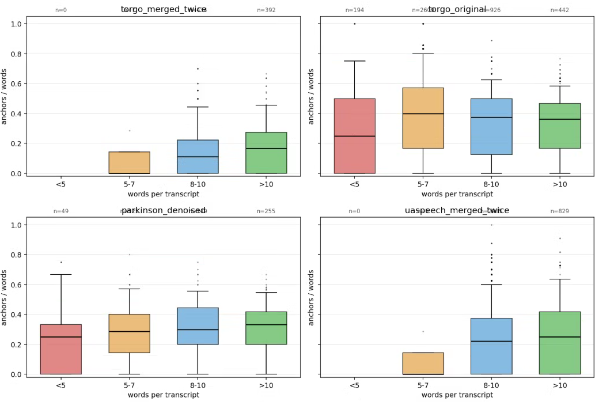}
\caption{$\tau=0.85$}
\end{subfigure}
\hfill
\begin{subfigure}{0.235\textwidth}
\centering
\includegraphics[width=\linewidth]{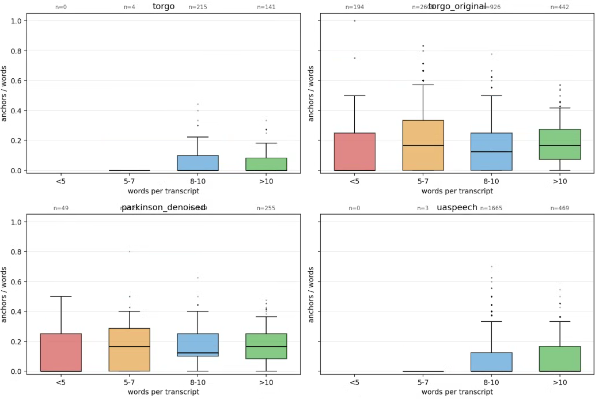}
\caption{$\tau=0.9$}
\end{subfigure}
\hfill
\begin{subfigure}{0.235\textwidth}
\centering
\includegraphics[width=\linewidth]{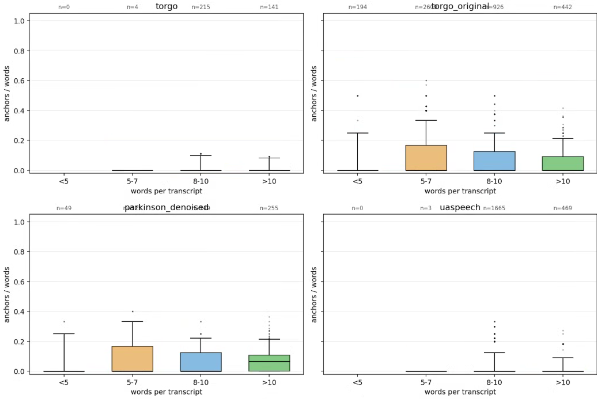}
\caption{$\tau=0.95$}
\end{subfigure}

\caption{Anchor Coverage per transcript, by word count bucket, for different Threshold.}
\label{fig:anchor_threshold_stats}
\end{figure*}

\paragraph{Coverage varies with utterance length.}
Shorter utterances ($< 5$ words) exhibit high variance: some have a single anchor word covering 100\% of content, while others have none. Longer utterances ($> 10$ words) show more stable but generally lower median coverage, because the additional words tend to include less intelligible function words and connecting phrases that fall below the confidence threshold. The most informative regime for AP-GRPO is medium-length utterances (5--10 words), where anchors are frequent enough to define spans but enough corrupted content remains for the phonetic reward to differentiate candidates.

\paragraph{Disease-specific coverage profiles.}
Even at the same $\tau$, the four datasets show distinct patterns that reflect each condition's impact on speech intelligibility:
\begin{itemize}
    \item \textbf{TORGO merged (ALS):} The lowest coverage across all thresholds. ALS causes progressive motor neuron loss, severely degrading articulatory precision. At $\tau = 0.85$, median coverage is below 20\% for utterances longer than 5 words, and at $\tau = 0.90$, coverage is nearly zero. This is the hardest condition for AP-GRPO and predicts the highest $\mu^*$.
    \item \textbf{TORGO original (ALS):} Substantially higher coverage than TORGO merged, demonstrating that the merging procedure---not just the disease---accounts for much of the anchor sparsity. This comparison isolates the effect of lexical align on reconstruction difficulty.
    \item \textbf{Parkinson's (SJTU-PD):} Moderate coverage at $\tau = 0.80$ (30--50\%), dropping to 10--15\% at $\tau = 0.90$. Parkinson's causes hypokinetic dysarthria with reduced loudness and imprecise consonants, but many words remain recognizable. The relatively stable coverage across word-count buckets suggests that the intelligibility degradation is uniform rather than concentrated on specific word types.
    \item \textbf{UASpeech merged (cerebral palsy):} Similar to Parkinson's in overall coverage level, but with a sharper drop between $\tau = 0.80$ and $\tau = 0.85$. This suggests that many words hover near the 0.80-0.85 confidence boundary: they are partially recognizable but not reliably so. The choice of $\tau$ is particularly consequential for this dataset.
\end{itemize}

\paragraph{Implications for $\tau$ selection.}
These statistics motivate our choice of $\tau = 0.85$ for the main experiments. A threshold of $\tau = 0.90$ or above leaves most datasets with too few anchors for the phonetic reward to operate (the ``anchor desert'' problem), while $\tau \leq 0.5$ would admit noisy anchors that could mislead the anchor gate. The $\tau$ sensitivity analysis in Figure~\ref{fig:learning_curve} (c) confirms that $\tau = 0.85$ provides a robust operating point across conditions.






\section{Constrained Optimization}
\label{sec:optimization}

\subsection{Formulation}

AP-GRPO constrains anchor coverage while maximizing the log-space reward:
\begin{align}
    \max_\theta \quad
    &
    \E_{\pitheta}
    \bigl[
    \log \ganchor+\Rphon
    \bigr],
    \label{eq:objective}
    \\
    \text{s.t.}\quad
    &
    \E_{\pitheta}[\ganchor]\geq\alpha,
    \\
    &
    \E_x[\KL(\pitheta\|\piref)]\leq\delta.
    \label{eq:constraint}
\end{align}
The corresponding Lagrangian introduces a dual variable $\mu\geq0$:

\begin{equation}
\begin{aligned}
    \Lcal(\theta,\mu)
    & =
    \E
    \left[
    \log \ganchor+\Rphon+\mu\ganchor
    \right]\\
    & -
    \mu\alpha
    -
    \eta_t\KL(\pitheta\|\piref).
\end{aligned}
\end{equation}

The anchor gate enters optimization in two ways. The $\log \ganchor$ term provides a persistent anchor-derived reward signal. The multiplier term $\mu\ganchor$ adaptively enforces anchor coverage when the constraint is violated. When the constraint is satisfied and $\mu$ decays toward zero, the log-gate term continues to provide a nonzero coverage gradient.

\subsection{GRPO Advantage}

The per-candidate reward $R_j=\log g_j+g_{\mathrm{anchor},j}$ 

and the constraint signal $g_j$ have different variance profiles. We therefore normalize them independently within the sampled group:

\[
\hat R_{\mathrm{norm},j}
=
\frac{R_j-\bar R}{\hat\sigma_R+\epsilon},
\qquad
\hat g_{\mathrm{norm},j}
=
\frac{g_j-\bar g}{\hat\sigma_g+\epsilon}.
\]
The advantage for candidate transcript $y_j$ is
\begin{equation}
    A_j
    =
    \hat R_{\mathrm{norm},j}
    +
    \mu_t\hat g_{\mathrm{norm},j}.
\end{equation}
Decoupled normalization prevents the high-variance phonetic score from obscuring the lower-variance anchor-coverage signal.

\subsection{Momentum-Accelerated Dual Update}

The dual variable is updated by projected subgradient descent with momentum. Let $\hat g_t$ denote the batch-level estimate of anchor coverage. We maintain an exponential moving average of constraint violations:
\begin{align}
    \tilde v_{t+1}
    &=
    \gamma \tilde v_t
    +
    (1-\gamma)(\alpha-\hat g_t),
    \label{eq:ema}
    \\
    \beta_t^{\mathrm{eff}}
    &=
    \beta_\mu
    \left(
    1+c[\tilde v_{t+1}]^+
    \right),
    \label{eq:beta_eff}
    \\
    \mu_{t+1}
    &=
    \left[
    \mu_t
    +
    \beta_t^{\mathrm{eff}}
    (\alpha-\hat g_t)
    \right]^+.
    \label{eq:dual_update}
\end{align}
When violations persist, $\tilde v_{t+1}>0$ and the effective step size increases up to $\beta_\mu(1+c)$. When the constraint is satisfied, the update returns to the base step size.

\subsection{Scheduled KL Penalty}

The KL coefficient follows a linear weak-to-strong schedule:
\[
\eta_t
=
\eta_0
+
(\eta_T-\eta_0)t/T,
\qquad
\eta_0<\eta_T.
\]

\section{Theoretical Analysis}
\label{app:proofs}

\subsection{Assumptions}

Throughout the appendix, $x$ denotes a pathological speech recording, $y$ denotes a candidate reconstruction transcript, and $\pitheta(\cdot|x)$ denotes the speech LLM policy. The anchor gate is denoted by $\ganchor(y,\Acal)$, and the inter-anchor phonetic reward is denoted by $\Rphon(y,x,\Acal)$.

\begin{assumption}[Bounded rewards]
\label{assum:bounded}
The anchor gate satisfies
\[
\ganchor(y,\Acal)\in[\epsilon_g,1],
\qquad
\epsilon_g>0.
\]
The inter-anchor phonetic score satisfies
\[
|\Rphon(y,x,\Acal)|\leq B.
\]
Consequently, the log-space reward
\[
R(y,x,\Acal)=\log \ganchor(y,\Acal)+\Rphon(y,x,\Acal)
\]
is bounded:
\[
R(y,x,\Acal)\in[\log\epsilon_g-B,\;B].
\]
\end{assumption}

\begin{assumption}[Feasibility]
\label{assum:feasibility}
There exists a policy $\pi^*\in\Pi_\delta$ satisfying the anchor-coverage constraint:
\[
\E_{\pi^*}[\ganchor]\geq \alpha,
\]
where
\[
\Pi_\delta
:=
\left\{
\pitheta:
\E_x[\KL(\pitheta\|\piref)]\leq\delta
\right\}.
\]
\end{assumption}

\begin{assumption}[Approximate primal oracle]
\label{assum:primal}
At each dual step $t$, GRPO produces a policy $\pi_{\theta_t}$ satisfying
\[
\Lcal(\theta_t,\mu_t)
\geq
\max_{\theta\in\Theta_\delta}
\Lcal(\theta,\mu_t)
-
\epsilon_p.
\]
The approximation error $\epsilon_p$ is validated empirically in Section~\ref{app:epsilon_p}. For the constraint-violation bound to be informative, we require $\epsilon_p\ll \mu^*$.
\end{assumption}

\begin{assumption}[Independent group sampling]
\label{assum:iid}
For each speech recording $x$, the $G$ candidate transcripts are sampled independently from $\pi_{\theta_t}(\cdot|x)$.
\end{assumption}

\begin{assumption}[Bounded gate variance]
\label{assum:variance}
The within-group variance of the anchor gate is uniformly bounded:
\[
\mathrm{Var}_{y\sim\pi_{\theta_t}}
\left(
\ganchor(y,\Acal)
\right)
\leq
\sigma^2.
\]
\end{assumption}

\subsection{Lagrangian Equivalence}

The constrained optimization problem in Eqs.~\eqref{eq:objective}--\eqref{eq:constraint} maximizes the log-space reward under an anchor-coverage constraint and a KL budget. Its Lagrangian is
\begin{equation}
\begin{aligned}
    \Lcal(\theta,\mu)
    & =
    \E
    \left[
    \log \ganchor+\Rphon+\mu\ganchor
    \right]\\
    & -
    \mu\alpha
    -
    \eta_t\KL(\pitheta\|\piref).
\end{aligned}
\end{equation}
where $\mu\geq0$ is the Lagrange multiplier for the anchor-coverage constraint.

\begin{theorem}[Lagrangian equivalence]
\label{thm:lagrangian_formal}
AP-GRPO with adaptive multiplier $\mu_t$ performs approximate primal-dual optimization of Eqs.~\eqref{eq:objective}--\eqref{eq:constraint}.

\begin{enumerate}
    \item The GRPO advantage
    \[
    A_j=\hat R_{\mathrm{norm},j}+\mu_t\hat g_{\mathrm{norm},j}
    \]
    in Eq.~\eqref{eq:advantage} is a clipped surrogate for the policy-gradient direction of $\Lcal(\theta,\mu_t)$.

    \item The dual update
    \[
    \mu_{t+1}
    =
    \left[
    \mu_t
    +
    \beta_t^{\mathrm{eff}}
    (\alpha-\hat g_t)
    \right]^+
    \]
    is stochastic projected subgradient descent on the convex dual function
    \[
    d(\mu)
    =
    \max_{\theta\in\Theta_\delta}
    \Lcal(\theta,\mu).
    \]
\end{enumerate}
\end{theorem}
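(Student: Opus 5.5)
For part (1), I will write the Lagrangian's policy gradient explicitly. I start from the score-function identity applied to $\Lcal(\theta,\mu_t)$ at fixed $\mu_t$. The terms $-\mu_t\alpha$ are constant in $\theta$, and the KL term is handled by the separate GRPO penalty with coefficient $\eta_t$. Using $\E_{y\sim\pitheta}[\nabla_\theta\log\pitheta(y|x)]=0$, this gives
\[
\nabla_\theta \Lcal(\theta,\mu_t)=\E_x\E_{y\sim\pitheta}\bigl[(R(y)-b_R+\mu_t(\ganchor(y)-b_g))\nabla_\theta\log\pitheta(y|x)\bigr]-\eta_t\nabla_\theta\KL,
\]
where $b_R$ and $b_g$ are arbitrary baselines and $R=\log\ganchor+\Rphon$.

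By Assumption~\ref{assum:iid}, the group means $\bar R$ and $\bar g$ are consistent Monte Carlo estimates of these baselines. The $G$ candidates then give an unbiased estimate of the gradient, up to the $O(1/G)$ leave-one-in bias that is standard for GRPO. Assumption~\ref{assum:bounded} and Assumption~\ref{assum:variance} keep these estimates well defined.

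The remaining discrepancy is the per-group division by $\hat\sigma_R+\epsilon$ and $\hat\sigma_g+\epsilon$. I will argue that it amounts to a positive, prompt-dependent rescaling of each component. Within a group it preserves the sign and ordering of the advantages. Across groups it acts as a per-prompt preconditioner, so the resulting direction has nonnegative inner product with the true gradient whenever the two components are reweighted consistently. This is what I mean by a surrogate. The clipped PPO ratio then makes the update a trust-region surrogate that agrees with the true gradient to first order at $\theta_{\mathrm{old}}$, which is the standard PPO/GRPO argument.

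For part (2), I will first show that $d(\mu)$ is convex. For each fixed $\theta$, $\Lcal(\theta,\mu)$ is affine in $\mu$ with slope $\E_{\pitheta}[\ganchor]-\alpha$, and $d$ is a pointwise supremum of affine functions, hence convex. Danskin's theorem, or simply the supremum definition, then shows that if $\theta^\star(\mu)$ attains the max, then $\E_{\pi_{\theta^\star}}[\ganchor]-\alpha\in\partial d(\mu)$. Rewriting the inequality $d(\mu')\ge \Lcal(\theta^\star,\mu')=d(\mu)+(\mu'-\mu)(\E[\ganchor]-\alpha)$ gives this directly. By the same inequality, with $\theta_t$ only $\epsilon_p$-optimal as in Assumption~\ref{assum:primal}, the vector $\E_{\pi_{\theta_t}}[\ganchor]-\alpha$ is an $\epsilon_p$-subgradient.

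The batch estimate $\hat g_t$ is unbiased for $\E_{\pi_{\theta_t}}[\ganchor]$ with variance at most $\sigma^2/(\text{batch size})$. The step $\mu_t+\beta_t^{\mathrm{eff}}(\alpha-\hat g_t)$ is therefore a step along a stochastic $\epsilon_p$-subgradient in the descent direction of $d$. The operator $[\cdot]^+$ is the Euclidean projection onto $\{\mu\ge0\}$. The effective step $\beta_t^{\mathrm{eff}}\in[\beta_\mu,\beta_\mu(1+c)]$ is a bounded, adapted, positive step size, so the update fits the standard framework for stochastic projected subgradient descent with variable steps.

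The main obstacle is the normalization in part (1). Dividing by $\hat\sigma$ is not an unbiased gradient estimator, and mixing independently normalized terms changes their relative weight by the ratio $\hat\sigma_g/\hat\sigma_R$. I will handle this by stating the claim precisely as ``surrogate'' in the descent-direction sense rather than claiming unbiasedness. I will bound the inner product using $\hat\sigma_R\le B-\log\epsilon_g$ from Assumption~\ref{assum:bounded} and $\hat\sigma_g\le\sigma$ from Assumption~\ref{assum:variance}. I will also note that the rescaling is equivalent to optimizing $\Lcal$ with an effective multiplier $\mu_t\hat\sigma_R/\hat\sigma_g$, which is still a valid primal–dual scheme for the same constraint.
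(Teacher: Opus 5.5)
\paragraph{Part (1) has a genuine gap.}
You correctly identify the decoupled normalization as the obstacle. The paper's proof simply asserts that it yields ``the same effective weighting induced by the Lagrangian'' and lumps the rest into sampling, clipping and primal error. But each device you propose to handle it fails.

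Dividing the two centered components by different constants preserves the ordering of each component, not of their sum $A_j$. Take $G=2$, $R_j-\bar R=(1,-1)$, $\ganchor(y_j)-\bar g=(-0.1,0.1)$ and $\mu_t=2$, with population standard deviations and $\epsilon\to0$. The centered Lagrangian integrand is then $(0.8,-0.8)$, so $\Lcal(\theta,\mu_t)$ favours $y_1$. However, $\hat R_{\mathrm{norm}}=(1,-1)$ and $\hat g_{\mathrm{norm}}=(-1,1)$ give $A=(-1,1)$. On this group the AP-GRPO update therefore points exactly opposite to the Lagrangian's policy gradient.

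Your bounds $\hat\sigma_R\le B-\log\epsilon_g$ and $\hat\sigma_g\le\sigma$ cannot exclude this. What matters is the ratio $\hat\sigma_R/\hat\sigma_g$, and it is unbounded as $\hat\sigma_g\to0$, which is exactly the regime where most candidates cover all anchors. Also, Assumption~\ref{assum:variance} bounds the population variance, not $\hat\sigma_g$.

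Even with consistent within-group weighting, the cross-prompt step fails. Write $\nabla_x$ for the per-prompt Lagrangian gradient. Positive per-prompt weights $w_x$ replace $\sum_x\nabla_x$ by $\sum_x w_x\nabla_x$. This is not a preconditioned gradient, and its inner product with $\sum_x\nabla_x$ can be negative.

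Finally, rewriting $A_j$ with the effective multiplier $\mu_t(\hat\sigma_R+\epsilon)/(\hat\sigma_g+\epsilon)$ shows that the primal step ascends a different, prompt-dependent Lagrangian. That refutes claim (1) as stated with $\mu_t$ rather than proving it. It also decouples the primal iterate from the multiplier that part (2) updates.

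\paragraph{How to repair (1), and the status of (2).}
To make (1) true you have two options. You can restrict it to a shared normalizer, e.g.\ normalizing $R_j+\mu_t\ganchor(y_j)$ jointly; then $A_j$ is a positive per-group multiple of the centered Lagrangian integrand, and your score-function argument applies group by group. Alternatively, do what the paper's proof effectively does: treat (1) as a heuristic correspondence and fold the normalization mismatch into $\epsilon_p$ through Assumption~\ref{assum:primal}.

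Your part (2) follows the paper's route: a pointwise maximum of functions affine in $\mu$, the subgradient $\E_{\pi^*_\mu}[\ganchor]-\alpha$, batch estimates, and a bounded step. It needs only Assumption~\ref{assum:primal}, not part (1), and your $\epsilon_p$-subgradient derivation is sharper than the paper's informal ``replace the exact maximizer by the GRPO policy''.

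One caveat remains. $\beta_t^{\mathrm{eff}}$ depends on $\tilde v_{t+1}$, hence on the current $\hat g_t$, so it is adapted but not predictable. Conditionally on the past, the expected dual step exceeds $\E[\beta_t^{\mathrm{eff}}]\,(\alpha-\E\hat g_t)$ by a nonnegative covariance term of up to $\beta_\mu c(1-\gamma)\mathrm{Var}(\hat g_t)$. So the update is a slightly upward-biased stochastic subgradient method rather than the standard one. The paper overlooks this as well.
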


\begin{proof}
For the primal update, the reward $R=\log \ganchor+\Rphon$ and the constraint signal $\ganchor$ enter the Lagrangian as
\[
\log \ganchor+\Rphon+\mu_t\ganchor.
\]
The AP-GRPO advantage uses a decoupled normalization of $R$ and $\ganchor$, followed by the same effective weighting induced by the Lagrangian. Therefore, the clipped GRPO objective approximates the policy-gradient direction of $\Lcal(\theta,\mu_t)$, up to sampling, clipping, and primal approximation error.

For the dual update, the dual function $d(\mu)$ is convex because it is the pointwise maximum of functions affine in $\mu$. Let $\pi^*_\mu$ denote a Lagrangian-maximizing policy for fixed $\mu$. A subgradient of $d$ is
\[
\partial_\mu d(\mu)
=
\E_{\pi^*_\mu}[\ganchor]
-
\alpha.
\]
Projected subgradient descent on $\min_{\mu\geq0}d(\mu)$ gives
\begin{equation}
\begin{aligned}
\mu_{t+1} & =
\left[
\mu_t
-
\beta
\left(
\E_{\pi^*_\mu}[\ganchor]-\alpha
\right)
\right]^+ \\
& =
\left[
\mu_t
+
\beta
\left(
\alpha-\E_{\pi^*_\mu}[\ganchor]
\right)
\right]^+.
    \end{aligned}
\end{equation}

Replacing exact expectations by batch estimates and the exact maximizer by the GRPO policy $\pi_{\theta_t}$ yields the stochastic projected update used in AP-GRPO. Momentum acceleration only replaces the fixed step size $\beta$ with a bounded time-varying step size $\beta_t^{\mathrm{eff}}$.
\end{proof}

\begin{remark}[Log reward and signal persistence]
\label{rem:log_signal}
The term $\log \ganchor$ provides a persistent anchor-derived optimization signal without an auxiliary baseline weight. Since
\[
\frac{\partial \log g}{\partial g}
=
\frac{1}{g}
>
0
\qquad
\text{for all }g\in[\epsilon_g,1],
\]
the coverage gradient remains nonzero even when the anchor constraint is satisfied and $\mu_t=0$. Moreover, the gradient is larger when coverage is low. In contrast, a multiplicative reward $g\cdot\Rphon$ has coverage derivative $\Rphon$, which can be weak when phonetic compatibility is poor. The log-space formulation therefore preserves anchor-derived guidance independently of the current phonetic score.
\end{remark}

\subsection{Convergence}

\begin{lemma}[KL stabilization]
\label{lem:kl_stab}
For any $\pi_1,\pi_2\in\Pi_\delta$ and any bounded function $f:\mathcal{Y}\times\mathcal{X}\to[0,1]$,
\[
\left|
\E_{\pi_1}[f]
-
\E_{\pi_2}[f]
\right|
\leq
\sqrt{2\delta}.
\]
\end{lemma}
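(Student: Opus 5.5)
The plan is to route both policies through the common reference $\piref$ and use the triangle inequality:
\[
\left|\E_{\pi_1}[f]-\E_{\pi_2}[f]\right|
\leq
\left|\E_{\pi_1}[f]-\E_{\piref}[f]\right|
+
\left|\E_{\piref}[f]-\E_{\pi_2}[f]\right|.
\]
It then suffices to show that each term is at most $\sqrt{\delta/2}$, because $2\sqrt{\delta/2}=\sqrt{2\delta}$ is exactly the claimed constant. Throughout, I read $\E_\pi[f]$ as $\E_x\E_{y\sim\pi(\cdot|x)}[f(y,x)]$. The outer expectation is over the same data distribution of $x$ for every policy, so all differences come from the conditional distributions.

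\textbf{Step 1 (pointwise in $x$).} Fix $x$. Since $f(\cdot,x)\in[0,1]$, we have $f-\tfrac12\in[-\tfrac12,\tfrac12]$, and subtracting a constant does not change a difference of expectations. This gives
\[
\left|\E_{\pi_1(\cdot|x)}[f]-\E_{\piref(\cdot|x)}[f]\right|
\leq \tfrac12\|\pi_1(\cdot|x)-\piref(\cdot|x)\|_1
=\TV\bigl(\pi_1(\cdot|x),\piref(\cdot|x)\bigr).
\]
Pinsker's inequality then bounds the right-hand side by $\sqrt{\tfrac12\KL(\pi_1(\cdot|x)\|\piref(\cdot|x))}$.

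\textbf{Step 2 (averaging over $x$).} Take $\E_x$ of both sides. Move the absolute value inside the outer expectation (Jensen for $|\cdot|$). Then apply Jensen to the concave map $u\mapsto\sqrt{u/2}$, which gives
\[
\E_x\sqrt{\tfrac12\KL_x}\leq\sqrt{\tfrac12\E_x[\KL_x]}\leq\sqrt{\delta/2},
\]
using the membership $\pi_1\in\Pi_\delta$. The identical argument applies to $\pi_2$. Combining the two bounds with the triangle inequality finishes the proof.

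\textbf{Main obstacle.} The only delicate point is the direction of the KL divergence. $\Pi_\delta$ constrains $\KL(\pitheta\|\piref)$, and Pinsker holds for either argument order, so this is harmless. The real constraint is the order of operations: the argument must go through $\piref$ and use Jensen after Pinsker. Trying to bound $\KL(\pi_1\|\pi_2)$ directly would fail, because KL satisfies no triangle inequality. One should also note that if $\Pi_\delta$ is meant with a per-$x$ budget instead of an expected one, Step 2 becomes trivial.
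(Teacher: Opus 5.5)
Your proof is correct and follows the paper's route: Pinsker's inequality gives $\TV(\pi_i,\piref)\le\sqrt{\delta/2}$, the expectation difference of a $[0,1]$-valued $f$ is bounded by total variation, and the triangle inequality through $\piref$ yields $2\sqrt{\delta/2}=\sqrt{2\delta}$. Your per-$x$ use of Pinsker followed by Jensen on $u\mapsto\sqrt{u/2}$ makes explicit a step the paper glosses over, since $\Pi_\delta$ only constrains the expected divergence $\E_x[\KL(\pitheta\|\piref)]$.
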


\begin{proof}
By Pinsker's inequality, for any $\pi\in\Pi_\delta$,
\[
\TV(\pi,\piref)
\leq
\sqrt{
\frac{1}{2}
\KL(\pi\|\piref)
}
\leq
\sqrt{\frac{\delta}{2}}.
\]
Since $f\in[0,1]$, the expectation difference is bounded by total variation. Applying the triangle inequality through $\piref$ gives
\begin{equation}
    \begin{aligned}
\left|
\E_{\pi_1}[f]
-
\E_{\pi_2}[f]
\right|
\leq \\
\left|
\E_{\pi_1}[f]
-
\E_{\piref}[f]
\right| \\
+
\left|
\E_{\piref}[f]
-
\E_{\pi_2}[f]
\right|
\leq \\
2\sqrt{\frac{\delta}{2}}
=
\sqrt{2\delta}.
    \end{aligned}
\end{equation}

\end{proof}

\begin{theorem}[Anchor-constraint convergence]
\label{thm:convergence_formal}
Under Assumptions~\ref{assum:bounded}--\ref{assum:variance}, let AP-GRPO run for $T$ dual steps with base step size $\beta>0$ and momentum boost $c\geq0$. Let $\mu^*$ be an optimal dual multiplier.

\textbf{Duality gap.}
\begin{equation}
\begin{aligned}
    \E
    \left[
    \frac{1}{T}
    \sum_{t=1}^{T}
    d(\mu_t)
    \right]
    -
    d(\mu^*)
    \leq \\
    \frac{
    |\mu_0-\mu^*|^2
    }{
    2\beta T
    }
    +
    \frac{
    \beta(1+c)^2
    }{2}
    \left(
    1+\frac{\sigma^2}{G}
    \right)
    +
    \epsilon_p.
    \label{eq:app_duality_gap}
\end{aligned}
\end{equation}

\begin{table*}[t]
\centering
\caption{Qualitative examples from each disease condition. \underline{Underlined} words are extracted audible anchors ($c > \tau$). \textbf{Bold} words in AP-GRPO output indicate correctly recovered inter-anchor content. \sout{Strikethrough} marks errors.}
\label{tab:qualitative}
\small
\begin{tabular}{@{}p{3.8cm}p{8.2cm}cc@{}}
\toprule
Example Audio & Text & $g_{\text{anchor}}$ & $R_{\text{inter}}$ \\
\midrule
\multirow{3}{*}{\shortstack[l]{TORGO F01\\sent\_0003\\(ALS, severe)}}
 & Ref: THE BUBBLE SHEET TEAR FOR DARK KNEW FOR BORN & --- & --- \\
 & Whisper: the bah \underline{sheet} teh fuh dah new fuh \underline{born} & --- & -3.860 \\
 & AP-GRPO: the \textbf{bubble} \underline{sheet} \textbf{tear for dark} \sout{new} \textbf{for} \underline{born} & 1.0 & -1.31 \\
\midrule
\multirow{3}{*}{\shortstack[l]{UASpeech M09\\seg5\\(Cerebral Palsy)}}
 & Ref: we do judith but multiflora in upward many & --- & --- \\
 & Whisper: \underline{we} doo juh buh muh-fla ih \underline{upward} \underline{many} & --- & -2.353 \\
 & AP-GRPO: \underline{we} \textbf{do judith but} multiform \textbf{in} \underline{upward} \underline{many} & 1.0 & -1.722 \\
\midrule
\multirow{3}{*}{\shortstack[l]{ADReSSo\\depaul1b\_137\\(Dementia)}}
 & Ref: like my sister who's almost eight years older than me when she was just born she probably didn't be able to say all sorts of things by the time she was two or whatever & --- & --- \\
 & Whisper: like..lai...like \underline{my sister} who's \underline{almost eight years older} than me wu wu en she she was \underline{just born she probably} didn't be able to \underline{say all sorts of things by the time} she was \underline{two} or or what & --- & -2.793 \\
 & AP-GRPO: like \underline{my sister} who's \underline{almost eight years older} than me \textbf{when she} was \underline{just born she probably} didn't be able to \underline{say all sorts of things by the time} she was \underline{two} \textbf{or whatever} & 1.0 & -1.956 \\
\midrule
\multirow{3}{*}{\shortstack[l]{SJTU-PD DL51\\(Parkinson's)}}
 & Ref: It's very rewarding working with clients and helping them establish their values and their vision. & --- & --- \\
 & Whisper: \underline{it's very rewarding} work in wait \underline{clients} and helping \sout{the} \underline{establish their values and their vision} & --- & -1.635 \\
 & AP-GRPO: \underline{it's very rewarding} \textbf{working with} \underline{clients} \textbf{and helping them} \underline{establish their values and their vision} & 1.0 & -1.356 \\
\bottomrule
\end{tabular}
\end{table*}

\textbf{Constraint violation.}
If $\mu^*>0$, the time-averaged anchor-coverage violation satisfies
\begin{equation}
    \alpha-\E_{\bar\pi}[\ganchor]
    \leq
    O
    \left(
    \frac{
    (1+c)\sqrt{1+\sigma^2/G}
    }{
    \mu^*\sqrt{T}
    }
    \right)
    +
    \frac{\epsilon_p}{\mu^*},
    \label{eq:app_constraint_violation}
\end{equation}
where $\bar\pi$ denotes the standard time-averaged primal iterate.
\end{theorem}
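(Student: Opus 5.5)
The plan is to treat the dual iteration of Theorem~\ref{thm:lagrangian_formal} as stochastic projected subgradient descent on the convex function $d(\mu)$. I would then run the standard online-convex-optimization argument, with three perturbations tracked explicitly: the bounded time-varying step $\beta_t^{\mathrm{eff}}\in[\beta,\beta(1+c)]$, the group-sampling noise in $\hat g_t$, and the primal oracle error $\epsilon_p$.

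\textbf{Duality gap.} Write the stochastic subgradient as $\hat s_t=\hat g_t-\alpha$. By Assumption~\ref{assum:primal} (an $\epsilon_p$-approximate maximizer $\theta_t$), $s_t=\E_{\pi_{\theta_t}}[\ganchor]-\alpha$ is an $\epsilon_p$-subgradient of $d$ at $\mu_t$. That is, for all $\mu\ge0$,
\[
d(\mu)\ge d(\mu_t)-\epsilon_p+(\mu-\mu_t)s_t .
\]
Since projection onto $[0,\infty)$ is nonexpansive,
\[
|\mu_{t+1}-\mu^*|^2\le|\mu_t-\mu^*|^2-2\beta_t^{\mathrm{eff}}(\mu_t-\mu^*)\hat s_t+(\beta_t^{\mathrm{eff}})^2\hat s_t^2 .
\]
Next I take conditional expectations. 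By Assumption~\ref{assum:iid}, $\E[\hat s_t\mid\mathcal F_t]=s_t$. Because $\ganchor\in[0,1]$ and $\alpha\le1$ (Assumption~\ref{assum:bounded}) and the gate variance is bounded (Assumption~\ref{assum:variance}), the second moment satisfies
\[
\E[\hat s_t^2\mid\mathcal F_t]\le s_t^2+\sigma^2/G\le 1+\sigma^2/G .
\]
Inserting the $\epsilon_p$-subgradient inequality, telescoping over $t=1,\dots,T$, and dividing by $T$ gives the three terms of \eqref{eq:app_duality_gap}. The variance bound enters as $(\beta(1+c))^2(1+\sigma^2/G)/(2\beta)$, and $\epsilon_p$ enters additively.

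\textbf{Main obstacle: the time-varying step.} The difficulty is that $\beta_t^{\mathrm{eff}}$ depends on $\tilde v_{t+1}$ and hence on the current sample, so it multiplies a random cross term. My first attempt would be to show that $\beta_t^{\mathrm{eff}}$ is $\mathcal F_t$-measurable, by using the EMA from the previous step or by noting that $\tilde v$ is updated before use. Failing that, I would normalize with the lower step $\beta$ in the distance term and the upper step $\beta(1+c)$ in the variance term. This worst-case sandwich is exactly why the stated bound has $\beta$ in the first term and $\beta(1+c)^2$ in the second. If the ratio $\beta_t^{\mathrm{eff}}/\beta$ still leaks into the cross term, I would absorb it into the $\epsilon_p$ and $O(\cdot)$ slack, since it is bounded by $1+c$.

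\textbf{Constraint violation.} For \eqref{eq:app_constraint_violation} I would use the standard Lagrangian-saddle argument with $\bar\pi$ the time-averaged primal iterate. Here the objective is concave in the policy measure, so averaging is legitimate. First, compare the dual iterates with $\mu^*+1$ rather than $\mu^*$ to extract the violation term $[\alpha-\E_{\bar\pi}[\ganchor]]^+$. Second, use Assumption~\ref{assum:feasibility} together with weak duality to lower-bound the objective suboptimality by $-\mu^*(\text{violation})$. Third, choose $\beta\propto 1/\big((1+c)\sqrt{T(1+\sigma^2/G)}\big)$ to balance the first two terms of the gap bound, which yields the $\sqrt{T}$ rate. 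Dividing by $\mu^*$ gives both the $1/(\mu^*\sqrt T)$ term and the $\epsilon_p/\mu^*$ term. Lemma~\ref{lem:kl_stab} would be used to keep all iterates in $\Pi_\delta$ with uniformly bounded gate expectations, which justifies the boundedness of $\hat s_t$.
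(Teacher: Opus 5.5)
Your duality-gap argument follows the paper's skeleton: nonexpansive projection, the subgradient inequality, zero-mean group noise with variance at most $\sigma^2/G$, and telescoping with $\beta_t^{\mathrm{eff}}\in[\beta,\beta(1+c)]$. Your $\epsilon_p$-subgradient step is actually sounder than the paper's decomposition $h_t=\bar h_t+\epsilon_t^{\mathrm{primal}}+\xi_t$. Assumption~\ref{assum:primal} bounds a Lagrangian \emph{value} gap, which does not control the direction error $\E_{\pi^*_{\mu_t}}[\ganchor]-\E_{\pi_{\theta_t}}[\ganchor]$. Your inequality $d(\mu)\ge d(\mu_t)-\epsilon_p+(\mu-\mu_t)s_t$, by contrast, follows directly from $\Lcal$ being affine in $\mu$. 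With the sandwich, though, the oracle term comes out as $(1+c)\epsilon_p$, because $\beta_t^{\mathrm{eff}}$ multiplies $d(\mu_t)-d(\mu^*)-\epsilon_p$, which can be negative.

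The genuine gap is the step you yourself flag, and none of your three options closes it. By Eqs.~\eqref{eq:ema}--\eqref{eq:beta_eff}, $\tilde v_{t+1}$ is built from $\hat g_t$ before $\beta_t^{\mathrm{eff}}$ is used. So for $c>0$ the step is a nonincreasing function of the current sample and is not $\mathcal F_t$-measurable. ``$\tilde v$ is updated before use'' is exactly what breaks measurability, and ``using the EMA from the previous step'' changes the algorithm rather than describing it. Consequently $\E[\beta_t^{\mathrm{eff}}(\hat s_t-s_t)\mid\mathcal F_t]=\mathrm{Cov}(\beta_t^{\mathrm{eff}},\hat g_t\mid\mathcal F_t)$ is nonzero in general, of size up to $\beta c(1-\gamma)\sigma^2/G$. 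This covariance multiplies $\mu_t-\mu^*$, which has no fixed sign. Your sandwich only works on the nonnegative quantity $d(\mu_t)-d(\mu^*)$, so it does not apply here.

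After dividing by $2\beta T$ this leaves a bias of up to $c(1-\gamma)(\sigma^2/G)\cdot\frac1T\sum_t\E|\mu_t-\mu^*|$, which decays neither in $T$ nor in $\beta$. It cannot be absorbed into the explicit gap bound, which has no $O(\cdot)$, nor into $\epsilon_p$, which is a property of the primal oracle. You have two ways to close it. One is to analyze the lagged rule $\beta_t^{\mathrm{eff}}=\beta_\mu(1+c[\tilde v_t]^+)$, which is predictable, so the cross term vanishes. The other is to carry the bias explicitly; this needs bounded dual iterates, e.g.\ projection onto $[0,\mu_{\max}]$ as in the implementation, which is still nonexpansive toward $\mu^*$ when $\mu^*\le\mu_{\max}$. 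The paper's proof has the same hole: it invokes $\E[\xi_t(\mu_t-\mu^*)]=0$ with $\beta_t^{\mathrm{eff}}$ silently pulled out of the expectation.

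For the constraint violation, the paper gives only a one-line appeal to weak duality and complementary slackness. Your saddle-point plan is more explicit, but as set up it does not yield the stated bound. Write $F$ for the expected log-space reward minus the KL term and $F^\star$ for the constrained optimum. Comparing with $\mu=\mu^*+r$ and combining with the primal oracle gives $r\,[\alpha-\E_{\bar\pi}[\ganchor]]^+\le E(\mu^*+r)$. Here $E(\mu)$ is the regret term, of order $|\mu_1-\mu|^2/(\beta T)+\beta(1+c)^2(1+\sigma^2/G)$, plus $\epsilon_p$.

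Three problems follow.
\begin{itemize}
\item Your choice $r=1$ produces $\epsilon_p$, not $\epsilon_p/\mu^*$, and leaves nothing to divide by $\mu^*$; for $\mu^*>1$ it is strictly weaker than the claim. Taking $r=\mu^*$, i.e.\ comparing with $2\mu^*$, gives the stated shape.
\item The step $F(\bar\pi)-F^\star\le\mu^*(\alpha-\E_{\bar\pi}[\ganchor])$ needs $d(\mu^*)=F^\star$, i.e.\ a zero duality gap (a saddle point), which must be assumed. Weak duality, $d(\mu^*)\ge F^\star$, points the wrong way.
\item A regret bound against a comparator other than the dual minimizer does not telescope under random, non-monotone steps. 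Your sandwich relied on the sign of $d(\mu_t)-d(\mu^*)$, and $(\mu-\mu_t)(\alpha-\hat g_t)$ has no sign. You need either $\beta_t^{\mathrm{eff}}$-weighted primal averaging or bounded dual iterates, and the predictability issue above reappears there.
\end{itemize}
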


\begin{proof}
Let
\[
h_t
:=
\alpha-\hat g_t
\]
denote the stochastic dual descent direction, where $\hat g_t$ is the batch estimate of $\E_{\pi_{\theta_t}}[\ganchor]$. The exact counterpart is
\[
\bar h_t
:=
\alpha-\E_{\pi^*_{\mu_t}}[\ganchor].
\]
Equivalently, the subgradient of the convex dual function is
\[
s_t
:=
\E_{\pi^*_{\mu_t}}[\ganchor]-\alpha
=
-\bar h_t.
\]

\textbf{Step 1: One-step progress.}
By non-expansiveness of projection onto $\mathbb{R}_+$,
\begin{equation}
    |\mu_{t+1}-\mu^*|^2
    \leq
    |\mu_t-\mu^*|^2
    +
    2\beta_t^{\mathrm{eff}}h_t(\mu_t-\mu^*)
    +
    (\beta_t^{\mathrm{eff}})^2h_t^2.
    \label{eq:app_one_step}
\end{equation}

\textbf{Step 2: Subgradient inequality.}
Since $d$ is convex,
\[
d(\mu^*)
\geq
d(\mu_t)
+
s_t(\mu^*-\mu_t).
\]
Rearranging and using $s_t=-\bar h_t$ gives
\[
\bar h_t(\mu_t-\mu^*)
\leq
-
\left(
d(\mu_t)-d(\mu^*)
\right).
\]

\textbf{Step 3: Stochastic decomposition.}
The stochastic direction decomposes as
\[
h_t
=
\bar h_t
+
\epsilon_t^{\mathrm{primal}}
+
\xi_t,
\]
where $\epsilon_t^{\mathrm{primal}}$ captures primal inexactness and $\xi_t$ is zero-mean sampling noise. Under Assumptions~\ref{assum:iid}--\ref{assum:variance},
\[
\E[\xi_t\mid\theta_t]=0,
\qquad
\E[\xi_t^2]\leq \frac{\sigma^2}{G}.
\]
Since $\ganchor\in[0,1]$, the descent direction is bounded, and
\[
\E[h_t^2]
\leq
1+\frac{\sigma^2}{G}
\]
up to the additive primal approximation term already absorbed into $\epsilon_p$.

\textbf{Step 4: Telescoping.}
The effective step size satisfies
\[
\beta_t^{\mathrm{eff}}
\in
[\beta,\beta(1+c)].
\]
Taking expectations in Eq.~\eqref{eq:app_one_step}, applying the subgradient inequality, using
\[
\E[\xi_t(\mu_t-\mu^*)]=0,
\]
and summing over $t=1,\ldots,T$ yields
\begin{equation}
    \begin{aligned}
\E
\left[
\sum_{t=1}^{T}
d(\mu_t)-d(\mu^*)
\right]
\leq
\frac{
|\mu_0-\mu^*|^2
}{
2\beta
}
+ \\
\frac{
T\beta(1+c)^2
}{2}
\left(
1+\frac{\sigma^2}{G}
\right)
+
T\epsilon_p.
    \end{aligned}
\end{equation}

Dividing by $T$ gives Eq.~\eqref{eq:app_duality_gap}.

For the constraint-violation bound, weak duality and complementary slackness for an active constraint imply that the anchor violation is controlled by the duality gap scaled by $\mu^*$. Optimizing the bound with respect to $\beta$ yields the stated $O(1/\sqrt{T})$ rate in Eq.~\eqref{eq:app_constraint_violation}.
\end{proof}

\begin{remark}[Scheduled KL strengthens damping]
\label{rem:scheduled_kl}
With
\[
\eta_t
=
\eta_0
+
(\eta_T-\eta_0)t/T
\qquad
\text{and}
\qquad
\eta_0<\eta_T,
\]
the effective KL budget $\delta(t)$ is non-increasing over training. Lemma~\ref{lem:kl_stab} therefore holds at each step with $\delta$ replaced by $\delta(t)\leq\delta(0)$, so the reward-expectation variation bound tightens as training progresses. Consequently, the dual variable cannot oscillate faster than one reversal per
\[
\left\lceil
\frac{1}{\sqrt{2\delta(t)}}
\right\rceil
\]
steps, and this lower bound on the oscillation half-period increases as $\delta(t)$ decreases.
\end{remark}



\section{Qualitative Examples}
\label{app:qualitative}

Figure~\ref{fig:overview} shows three qualitative examples at the training. 
We will provide per-disease qualitative examples showing the Whisper output, reference transcript, timestamped anchors, and reconstructions from each method. Following data access consent, we will not release the audio as we are not the owners. Instead, we will provide index for readers to find them on the original datasets.


\end{document}